\declaretheorem[style=plain,qed=$\blacksquare$]{theorem}
\declaretheorem[style=plain,name=Definition,qed=$\blacksquare$]{Definition}
\declaretheorem[style=plain,name=Remark,qed=$\blacksquare$]{remark}
\declaretheorem[style=plain,name=Proposition,qed=$\blacksquare$]{proposition}
\declaretheorem[style=plain,name=Corollary,qed=$\blacksquare$]{corollary}
\def\mc{\ensuremath\mathcal}
\begin{document}
%\addtolength{\voffset}{+0.069 in}
%\sloppy
\doublespacing
\title{Combination Networks with or without Secrecy Constraints: The Impact of Caching Relays\thanks{This work was supported in part by the National Science Foundation Grants CNS 13-14719 and CCF 17-49665. Sections I-III of this paper was presented in part at the IEEE International Symposium of Information Theory (ISIT) 2017.}}
\author{Ahmed A. Zewail}
\author{Aylin Yener}
\vspace{-.1 in}
\affil{\normalsize Wireless Communications and Networking Laboratory (WCAN)\\
The School of Electrical Engineering and Computer Science\\
The Pennsylvania State University, University Park, PA 16802.\\
\quad zewail@psu.edu   \qquad yener@engr.psu.edu}
\maketitle
\vspace{-0.9 in}
\begin{abstract}
\vspace{-.2 in}
This paper considers a two-hop network architecture known as a combination network, where a layer of relay nodes connects a server to a set of end users. In particular, a new model is investigated where the intermediate relays employ caches in addition to the end users. First, a new centralized coded caching scheme is developed that utilizes maximum distance separable (MDS) coding, jointly optimizes cache placement and delivery phase, and enables decomposing the combination network into a set virtual multicast sub-networks. It is shown that if the sum of the memory of an end user and its connected relay nodes is sufficient to store the database, then the server can disengage in the delivery phase and all the end users' requests can be satisfied by the caches in the network. Lower bounds on the normalized delivery load using genie-aided cut-set arguments are presented along with second hop optimality. Next recognizing the information security concerns of coded caching, this new model is studied under three different secrecy settings: 1) secure delivery where we require an external entity must not gain any information about the database files by observing the transmitted signals over the network links, 2) secure caching, where we impose the constraint that end users must not be able to obtain any information about files that they did not request, and 3) both secure delivery and secure caching, simultaneously. We demonstrate how network topology affects the system performance under these secrecy requirements. Finally, we provide numerical results demonstrating the system performance in each of the settings considered. 
\end{abstract}
\vspace{-.3 in}
\begin{IEEEkeywords}
\vspace{-.2 in}
Combination networks with caching relays, coded caching, maximum distance separable (MDS) codes, secure delivery, secure caching. 
 \end{IEEEkeywords}
\section{Introduction}\label{intro}
Caching is foreseen as a promising avenue to provide content based delivery services for $5$G systems and beyond \cite{almeroth1996use,korupolu2001placement}. Caching enables shifting the network load from peak to off-peak hours leading to a significant improvement in overall network performance. During off-peak hours, in the \textit{cache placement phase}, the network is likely to have a considerable amount of under-utilized wireless bandwidth which is exploited to place \textit{functions} of data contents in the cache memories of the network nodes. This phase takes place prior to the end users' content requests, and thus content needs to be placed in the caches without knowing what specific content each user will request. The cached contents help reduce the required transmission load when the end users actually request the contents, during peak traffic time, known as the \textit{delivery phase}, not only by alleviating the need to download the entire requested data, but also by facilitating multicast transmissions that benefit multiple end users\cite{maddah2014fundamental}. As long as the storage capabilities increase, the required transmission load during peak traffic can be decreased, leading to the rate-memory trade-off \cite{maddah2014fundamental,ji2014fundamental}.

% Such allocation procedure has to be performed subject to the memory capacity constraints without the knowledge of the actual upcoming demands.
Various network topologies with caching capabilities have been investigated to date, see for example  \cite{karamchandani2014hierarchical,maddah2015cache,shariatpanahi2015multi,
ji2015caching,ji2015fundamental,tang2016coded,wan2017combination,wan2017novel,wan2017caching}. References \cite{karamchandani2014hierarchical,
ji2015caching,ji2015fundamental,tang2016coded,wan2017combination} have studied two-hop cache-aided networks. Reference \cite{karamchandani2014hierarchical} has studied hierarchical networks, where the server is connected to a set of relay nodes via a shared multicast link and the end users are divided into equal-size groups such that each group is connected to only one relay node via a multicast link. Thus, one relay needs to be shared by multiple users. We will not consider this model. 
 
 A fundamentally different model is investigated in references \cite{ji2015caching} and \cite{ji2015fundamental} where multiple overlapping relays serve each user. In this symmetric layered network, known as a \textit{combination network} \cite{ngai2004network}, the server is connected to a set of $h$ relay nodes, and each end user is connected to exactly $r$ relay nodes, thus each relay serves ${h-1}\choose {r-1}$ end nodes. In these references, end users randomly cache a fraction of bits from each file subject to the memory capacity constraint. Two delivery strategies have been proposed: one relies on routing the requested bits via the network links and the other is based on coded multicasting and combination network coding techniques \cite{xiao2007binary}. More recently, reference \cite{tang2016coded} has considered a class of networks which satisfies the resolvability property, which includes combination networks where $r$ divides $h$ \cite{baranyai1975factorization}. A centralized coded caching scheme has been proposed and shown to outperform, analytically and numerically, those in \cite{ji2015caching} and \cite{ji2015fundamental}. The cache allocation of \cite{tang2016coded} explicitly utilizes resolvability property, so that one can design the cache contents that make each relay node see the same set of cache allocations. In all of these references studying combination networks -resolvable or not-, only the end users are equipped with cache memories. 

%In this work, we consider the same layered network model as those of references \cite{ji2015caching} and \cite{ji2015fundamental}, i.e., a combination network that is not necessarily resolvable. We introduce caches at the relay nodes in addition to the ones at the end users, as shown in Fig. \ref{Fig:sym}. We investigate the cooperation between caches from different layers. 

In this paper, we boost the caching capabilities of combination networks by introducing caches at the relay nodes. In particular, we consider a general combination network equipped with caches at \textit{both the relay nodes and the end users}. The model in effect enables cooperation between caches from different layers to aid the server. We develop a new centralized coded caching scheme, by utilizing maximum distance separable (MDS) codes \cite{lin2004error} and jointly optimizing the cache placement and delivery phases. This proposed construction enables \textit{decomposing} the coded caching in combination networks into sub-problems in the form of the classical setup studied in \cite{maddah2014fundamental}. We show that if the sum of the memory size of a user and its connected relay nodes is large enough to store the library, then the server can disengage during the delivery phase altogether and all users' requests can be satisfied utilizing the cache memories of the relay nodes and end users. Genie-aided cut-set lower bounds on the transmission rates are provided. Additionally, for the special case, where there are no caches at the relays, we show that our scheme achieves the same performance of the scheme in \cite{tang2016coded} without requiring resolvability.  %We then introduce caches at the relay nodes in addition to the end users, and extend our scheme to this new model. 
%Our end goal is to best utilize both the relays and end users caches in order to minimize the delivery load. Towards this end goal,

In many practical scenarios, reliability is not the only consideration. Confidentiality, especially in file sharing systems, is also of paramount importance. Thus in the latter part of the paper, for the same model, we address the all important concerns of information security. Specifically, we consider combination networks with caches at the relays and end users, under three different scenarios. In the first scenario, we consider that the database files must be kept secret from any external eavesdropper that overhears the delivery phase, i.e., \textit{secure delivery} \cite{sengupta2015fundamental} \cite{awan2015fundamental}. In the second scenario, we consider that each user must only be able to decode its requested file and should not be able gain any information about the contents of the remaining files, i.e., \textit{secure caching} \cite{ravindrakumar2016fundamental} \cite{zewail2016fundamental}. Last, we consider both secure delivery and secure caching, simultaneously. We note that, in security for cache-aided combination networks, the only previous work consists of our recent effort \cite{zewail2016coded}, where the schemes are limited to resolvable combination networks with no caching relays.% Here, the achievability relies on both secret sharing schemes and one-time pads. 

 For all the considered scenarios, our proposed schemes based on the decomposition turn out to be optimal with respect to the total transmission load per relay, i.e., we achieve the cut set bound. Our study demonstrates the impact of cache memories at the relay nodes (in addition to the end users) in reducing the transmission load of the server. In effect, these caches can cooperatively replace the server during the delivery phase under sufficient total memory. Furthermore, we demonstrate the impact of the network topology on the system performance under secrecy requirements. In particular, we demonstrate that satisfying the \textit{secure caching} requirement does not require encryption keys and is feasible even with memory size less than the file size, unlike the case in references \cite{ravindrakumar2016fundamental} and \cite{zewail2016fundamental}. In addition, we observe that the cost due the \textit{secure delivery} is almost negligible in combination networks, similar to the cases in references \cite{sengupta2015fundamental} and \cite{awan2015fundamental} for other network topologies.

The remainder of the paper is organized as follows. Section \ref{sec:sm} describes the system model. In Section \ref{sec:nonsec}, we propose a new centralized coded caching scheme that is applicable to any cache-aided combination network. In Sections \ref{sec:achdelivery}, \ref{sec:achconf} and \ref{sec:achconf_sec}, we detail the achievability techniques for the three secrecy scenarios. In Section \ref{sec:discuss}, we provide the numerical results and discuss the insights learned from them. Section \ref{sec:con} concludes the paper. 
\vspace{-.09 in} 
\section{System Model}\label{sec:sm}
\vspace{-.03 in} 
\subsection{Network Model}
Consider a combination network, where the server, $S$, is connected to $K$ end users via a set of $h$ relay nodes. More specifically, each end user is connected to a distinct set of $r$ relay nodes, $r<h$, with $K={h \choose r}$. Each relay node is connected to $\hat{K}={{h-1}\choose{r-1}}=\frac{rK}{h}$ end users. Similar to references \cite{ji2015caching,ji2015fundamental,tang2016coded}, all network links are \textit{unicast}. In addition, similar to references  \cite{maddah2014fundamental,ji2014fundamental,sengupta2015fundamental,awan2015fundamental,
ravindrakumar2016fundamental,zewail2016fundamental,karamchandani2014hierarchical,ji2015caching,ji2015fundamental,tang2016coded}, all network links are assumed to be noiseless. Let $\mathcal{R}\!=\!\{\Gamma_1,..,\Gamma_h\}$ denote the set of relay nodes, and $\mathcal{U}\!=\!\{U_1,..,U_K\}$ the set of all end users. We denote the set of end users connected to $\Gamma_i$ by $\mathcal{N}(\Gamma_i)$, $|\mathcal{N}(\Gamma_i)|\!=\!\hat{K}$ for $i=1,..,h$, and the set relay nodes connected to user $i$ by $\mathcal{N}(U_i)$, $|\mathcal{N}(U_i)|=r$.
\begin{figure}[t]
\includegraphics[width=4.3 in,height=2.7 in]{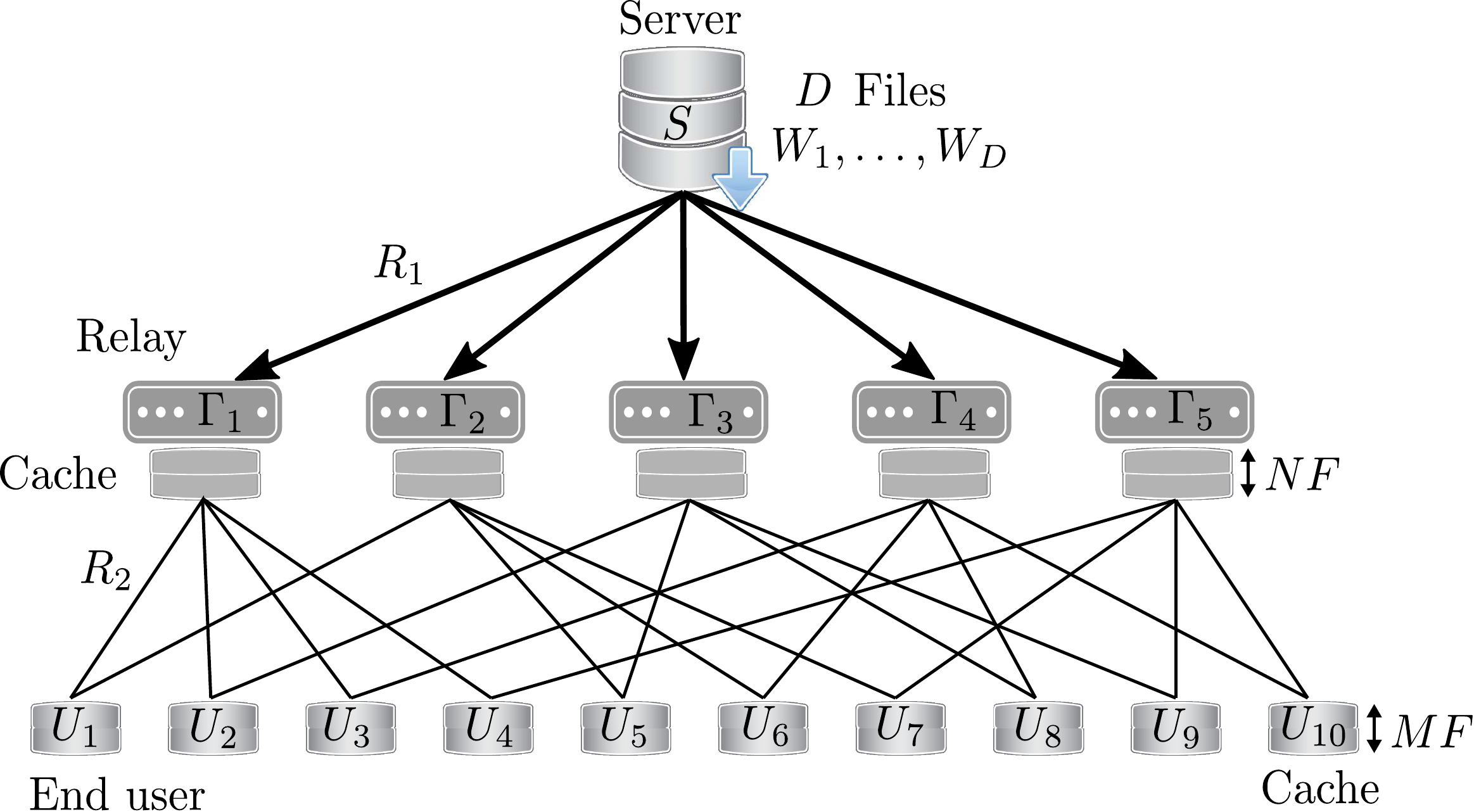}
\centering
\caption{\small A combination network with $K\!=\!10$, $h\!=\!5$, $r\!=\!2$, and caches at both relays and end users.}\label{Fig:sym}
\centering
\vspace{-.1 in}
\end{figure}
%
%\begin{figure}[t]
%\includegraphics[width=3.5 in,height=2.4 in]{figures/system_model1.eps}
%\centering
%\caption{\small A combination network with $K\!=\!10$, $h\!=\!5$ and $r\!=\!2$.}\label{fig_sm2}
%\centering
%\end{figure}
The function $Index(,): (i,k)\rightarrow \{1,..,\hat{K}\}$, where $i\in\{1,..,h\}$ and $k\in \mathcal{N}(\Gamma_i)$, is defined as a function that orders the end users connected to relay node $\Gamma_i$ in  an ascending manner. For example, for the network in Fig. \ref{Fig:sym}, $\mathcal{N}(\Gamma_2)=\{1,5,6,7\}$, $\mathcal{N}(\Gamma_4)=\{3,6,8,10\}$, and
\vspace{-.1 in}
\begin{align}\nonumber
&Index(2,1)=1, \ Index(2,5)=2, \ Index(2,6)=3, \ Index(2,7)=4, \nonumber \\
&Index(4,3)=1, \ Index(4,6)=2, \ Index(4,8)=3, \ Index(4,10)=4. \nonumber 
\end{align}
\vspace{-.3 in}
\subsection{Caching Model}
Server $S$ has $D$ files, $W_1,.., W_D$, each with size $F$ bits. We treat the case where the number of users is less than or equal to the number of files, i.e., $K\leq D$. Each end user is equipped with a cache memory of size $MF$ bits while each relay node has cache memory of size $NF$ bits, i.e., $M$ and $N$ denote the normalized cache memory sizes at the end users and relay nodes, respectively. The system operates in two phases.
\subsubsection{Cache Placement Phase} 
In this phase, the server allocates functions of its database files in the relay nodes end users caches. The allocation is done ahead of and without the knowledge of the demand of the individual users. 
\begin{Definition}
(Cache Placement): 
The content of the cache memories at relay node $j$ and user $k$, respectively are given by 
\begin{equation}
V_{j} = \nu_{j} (W_1, W_2,..,W_D), \qquad  Z_{k} = \phi_{k} (W_1, W_2,..,W_D),
\end{equation}
%\begin{equation}
%  Z_{k} = \phi_{k} (W_1, W_2,..,W_D),
%\end{equation}
where $\nu_{j} : [2^F]^D\rightarrow [2^F]^N$ and $\phi_{k} : [2^F]^D\rightarrow [2^F]^M$, i.e., $H(V_{j})\leq NF$ and $H(Z_{k})\leq MF.$ 
\end{Definition}
\subsubsection{Delivery Phase}
Each user requests a file independently and randomly \cite{maddah2014fundamental}. Let $d_k$ denote the index of the requested file by user $k$, i.e., $d_k\in\{1,2,..,D\}$; $\bm d$ represents the demand vector of all users. The server responds to users' requests by transmitting signals to the relay nodes. Then, each relay transmits \textit{unicast} signals to its connected end users.   
From the $r$ received signals and $Z_k$, user $k$ must be able to reconstruct its requested file $W_{d_k}$.  
\begin{Definition}
(Coded Delivery): The mapping from the database files, $\{W_1,..,W_D\}$, and the demand vector $\bm d$ into the transmitted signal by the server to $\Gamma_i$ is given by the encoding function
\begin{equation}
X_{i,\bm d} = \psi_i(W_1,..,W_D,\bm d), \qquad i =1,2,..,h, 
\end{equation}
where $\psi_i: [2^F]^D\times \{1,..,D\}^K\rightarrow [2^F]^{R_1}$, and $R_1$ is the rate, normalized by the file size, $F$, of the transmitted signal from the server to each relay node. The transmitted signal from $\Gamma_i$ to user $k\in \mathcal{N}(\Gamma_i)$, is given by the encoding function 
\begin{equation}
Y_{i,\bm d,k} = \varphi_{k} (X_{i,\bm d},V_i,\bm d),  
\end{equation}
where $\varphi_{k} : [2^F]^{R_1}\times[2^F]^{N}\times \{1,..,D\}^K\rightarrow [2^F]^{R_2}$, and $R_2$ is the normalized rate of the transmitted signal from a relay node to a connected end user.
User $k$ recovers its requested file by
\begin{equation}
\hat W_{k} = \mu_{k}(Z_k,\bm d,\{Y_{i,\bm d,k}:i\in \mathcal{N}(U_k)\}),  
\end{equation}   
where $\mu_k: [2^F]^M\times \{1,..,D\}^K\times [2^F]^{rR_2}\rightarrow [2^F]$ is the decoding function.
\end{Definition}
We require that each end user $k$ recover its requested file reliably, i.e., for any $\epsilon>0$, 
\begin{equation}\label{reliabeconst}
\max_{\bm d,k} P(\hat W_{d_k}\neq W_{d_k})<\epsilon.
\end{equation}
For a positive integer, $L$, we will use the notation $[L] \triangleq \{1,.., L\}$.
\section{A new coded caching scheme for combination networks}\label{sec:nonsec}
 We develop a new caching scheme for general cache-aided combination networks. In addition, we show that the upper bound derived in \cite{tang2016coded} for resolvable combination networks, is in fact achievable for all combination networks. 

The main idea behind our proposed scheme is that each file is encoded using an $(h,r)$ maximum distance separable (MDS) code \cite{lin2004error,yeung2008information}. Then, each relay node acts as a server for one of the resulting encoded symbols. Since each end user is connected to $r$ different relay nodes, by the end of the delivery phase, it will be able to obtain $r$ different encoded symbols that can be used to recover its requested file. 
%\section{Combination networks with cache-aided relays}\label{sec:relaycache}
 %We start by investigating combination networks where the relays have caching capabilities such that $0\leq N<\frac{D}{r}$. The case where $N\geq \frac{D}{r}$ is discussed at the end of this section in remark \ref{remarklargeN}. %We assume each relay to be equipped with a cache memory with size $CF$ bits, i.e., $C$ is the normalized memory size at each relay node.
\vspace{-.21 in}
\subsection{Cache Placement Phase}\label{sec:nonsecplac}
As a first step, the server divides each file into $r$ equal-size subfiles. Then, it encodes them using an $(h,r)$ maximum distance separable (MDS) code \cite{lin2004error,yeung2008information}. We denote by $f_n^i$ the resulting encoded symbol, where $n$ is the file index and $i=1,2,..,h$. The size of each encoded symbol, $f_n^i$, is $F/r$ bits, and any $r$ encoded symbols are sufficient to reconstruct the file $n$. The server divides each encoded symbol into two parts, $f_n^{i,1}$ and $f_n^{i,2}$, such that the size of $f_n^{i,1}$ is $\frac{NF}{D}$ bits, and the size of $f_n^{i,2}$ is $(\frac{1}{r}-\frac{N}{D})F$ bits.

We describe the achievability for $M=\frac{(t_1-t_2)Nr}{\hat{K}}+\frac{t_2D}{\hat{K}}$, $t_1 \!\in\!\{0,1,..,\min(\hat{K},\lfloor\frac{\hat{K}N}{D}\rfloor\}$, and $ t_2\!\in\!\{0,1,..,\hat{K}\}$, noting that the convex envelope is achievable by memory sharing \cite{maddah2014fundamental}. 
First, the server places $f_{n}^{j,1}$, $\forall n$ in the cache memory of relay node $\Gamma_j$. Then, user $k$, with $k \in \mathcal{N}(\Gamma_j)$, caches a random fraction of $\frac{t_1}{\hat{K}}$ bits from $f_{n}^{j,1}$, $\forall n$, which we denote by $f_{n,k}^{j,1}$. On the other hand, $f_n^{i,2}$ is divided into ${{\hat{K}}\choose {t_2}}$ disjoint pieces each of which is denoted by $f_{n,\mathcal{T}}^{i,2}$, where $n$ is the file index, i.e., $n\in[D]$, $i$ is the index of the encoded symbol, $i=1,..,h$, and $\mathcal{T}\!\subseteq[\hat{K}], |\mathcal{T}|\!=\!t_2$. The size of each piece is $\frac{(\frac{1}{r}-\frac{N}{D})}{{{\hat{K}}\choose {t_2}}}F$ bits. 
 The server allocates the pieces $f_{n,\mathcal{T}}^{j,2}$, $\forall n$ in the cache memory of user $k$ if $k \in \mathcal{N}(\Gamma_j)$ and $Index(j,k)\in \mathcal{T}$. Therefore, the cache contents at the relay nodes and end users are given by 
% \vspace{-.05 in}
 \begin{equation}
V_k\!=\!\left\{f_{n}^{k,1}:  \ \forall n \right\},\\  \ \qquad Z_k\!\!=\left\{ f_{n,k}^{j,1},f_{n,\mathcal{T}}^{j,2}\!\!:\! j\!\in\! \mathcal{N}(U_k),\! \ \!  Index(j,k)\!\in\! \mathcal{T},  \forall n\right\}.
 \end{equation}
%\begin{equation}
%Z_k\!\!=\!\!\left\{\!f_{n,k}^{j,1},f_{n,\mathcal{T}}^{j,2}\!\!:\! j\!\in\! \mathcal{N}(U_k),\! \ \!  Index(j,k)\!\in\! \mathcal{T},  \forall n\!\right\}\!.\!
%\end{equation}
Clearly, this satisfies the memory constraint at each relay node. %The number of the accumulated bits at the cache memory of each relay node is given by 
%\begin{equation}
%N \frac{C}{N{{{\hat{K}}\choose {t_1}}}}F {{\hat{K}}\choose {t_1}}=CF.
%\end{equation}
%On the other hand,
 The number of the accumulated bits at the cache memory of each end user is given by 
\begin{equation}
Dr \frac{N}{D}\frac{t_1}{\hat{K}}F +Dr \frac{(\frac{1}{r}-\frac{N}{D})}{{{\hat{K}}\choose {t_2}}}F {{\hat{K}\!-1}\choose {t_2\!-1}}=\frac{Nt_1r}{\hat{K}}F+\frac{(D-Nr)t_2}{\hat{K}}F =MF,
\end{equation} 
which satisfies the memory constraint. We summarize the cache placement procedure in Algorithm \ref{alg_place1}.
\begin{algorithm}[t]
\begin{algorithmic}[1]
\REQUIRE $\{ W_{1}, \dots, W_{D}\}$ %$a_{\mc S}, \mc S \in \mc A_{\mc S} $
\ENSURE $ Z_k, k \in [K]$
\FOR{$l \in [D]$}
\STATE Encode each file using an $(h,r)$ MDS code $\rightarrow f_l^i$, $i=1,..,h$.
\FOR{$i\in [h]$}
\STATE Divide $f_l^{i}$ into $f_l^{i,1}$ with size $\frac{NF}{D}$ bits and $f_l^{i,2}$ with size $(\frac{1}{r}-\frac{N}{D})F$ bits.
\STATE $V_i \leftarrow  f_l^{i,1}$
\STATE Partition $f_l^{i,2}$ into equal-size pieces $f_{l,\mathcal{T}}^{i,2}$, $\mathcal{T}\!\subseteq\![\hat{K}]$ and $|\mathcal{T}|\!=\!t_2$. 
\ENDFOR 
\ENDFOR

\FOR{$k \in [K]$}
\STATE User $k$ caches a random fraction $\frac{t_1}{\hat{K}}$ bits from $f_{n}^{j,1}$, $\forall n$ $\rightarrow f_{n,k}^{j,1}$
\STATE $Z_k \leftarrow \bigcup_{j \in \mathcal{N}(U_k)}\bigcup_{l \in [N]} \left\{f_{l,\mathcal{T}}^j:  Index(j,k)\in \mathcal{T} \right\}\bigcup{f_{l,k}^{j,1}}$
\ENDFOR
\end{algorithmic}
 \caption{Cache placement procedure}\label{alg_place1}
\end{algorithm}
\vspace{-.2 in}
%Clearly, this allocation satisfies the memory capacity constraints. 
\subsection{Coded Delivery Phase}\label{sec:nonsecdel}
After announcing the demand vector to the network, the server and the relays start to serve the end users' requests. For each relay $\Gamma_j$, at each transmission instance, we consider $\mathcal{S}\subseteq[\hat{K}]$, where $|\mathcal{S}|=t_2+1$. For each choice of $\mathcal{S}$, the server transmits to the relay node $\Gamma_j$, the signal 
\begin{equation}
X_{j,\bm d}^{\mathcal{S}}=\bigoplus_{\{i:i \in \mathcal{N}(\Gamma_j),\ Index(j,i)\in \mathcal{S}\}} f^{j,2}_{d_i,\mathcal{S} \setminus\{Index(j,i)\}}.
\end{equation}
In total, the server transmits to $\Gamma_j$ the signal% $X_{j,\bm d}=\bigcup_{\mathcal{S}\subseteq[\hat K]:|\mathcal{S}|=t_2+1} \left\{X_{j,\bm d}^{\mathcal{S}}\right\}.$ Consequently, 
\begin{equation}
X_{j,\bm d}=\bigcup_{\mathcal{S}\subseteq[\hat K]:|\mathcal{S}|=t_2+1} \left\{X_{j,\bm d}^{\mathcal{S}}\right\}.
\end{equation}  
%Then,
 $\Gamma_j$ forwards the signal $X_{j,\bm d}^{\mathcal{S}}$ to user $i$ whenever $Index(j,i)\in\mathcal{S}$. In addition, $\Gamma_j$ transmits the missing bits from $f_{d_i}^{j,1}$ to user $i$, $i\in \mathcal{N}(\Gamma_j)$. The transmitted signal from $\Gamma_j$ to user $i$ is %expressed as  
\begin{align}
Y_{j,\bm d,i}=\Big\{ f_{d_i}^{j,1} \setminus f_{d_i,i}^{j,1}\Big\} \bigcup_{\mathcal{S}\subseteq[\hat K]:|\mathcal{S}|=t_2+1,Index(j,i)\in\mathcal{S}} \big\{X_{j,\bm d}^{\mathcal{S}}\big\}.
\end{align} 
 User $i$ can recover $\left\{f^{j,2}_{d_i,\mathcal{T}}: \mathcal{T}\subseteq
 [\hat{K}]\setminus \{Index(j,i)\} \right\}$ from the signals received from $\Gamma_j$, utilizing its cache's contents. XORing these pieces to the ones already in its cache, i.e., $f^{j,2}_{d_i,\mathcal{T}}$ with $Index(j,i)\in \mathcal{T}$, user $i$ can recover the encoded symbol $f^{j,2}_{d_i}$. Additionally, from its received signal, user $i$ directly gets $f_{d_i}^{j,1}$. Therefore, it can obtain $f^{j}_{d_i}$. Since, user $i$ receives signals from $r$ different relay nodes, it can obtain the encoded symbols $f^{j}_{d_i}$, $\forall j \in \mathcal{N}(U_i)$, and is able to successfully reconstruct its requested file $W_{d_i}$. The delivery procedure is summarized in Algorithm \ref{alg_delv1}.
 \begin{algorithm}[t]
\begin{algorithmic}[1]
\REQUIRE $\bm d$
\ENSURE $X_{j,\bm d}, Y_{j,\bm d,i}, j\in [h], i\in [K] $
\FOR{$j\in [h]$}
\FOR{$\mc S  \in [\hat{K}], |\mc S|=t_2+1$} 
\STATE $X_{j,\bm d}^{\mathcal{S}}\leftarrow \bigoplus_{\{i:i \in \mathcal{N}(\Gamma_j),\ Index(j,i)\in \mathcal{S}\}} f^{j,2}_{d_i,\mathcal{S} \setminus\{Index(j,i)\}}$
\ENDFOR
\STATE $X_{j,\bm d}\leftarrow \bigcup_{\mathcal{S}\subseteq[\hat K]} \{X_{j,\bm d}^{\mathcal{S}}\}$
\FOR{$i\in \mc N(\Gamma_j)$}
\STATE $Y_{j,\bm d,i}\leftarrow \left\{ f_{d_i}^{j,1} \setminus f_{d_i,i}^{j,1}\right\}\bigcup_{\mathcal{S}\subseteq [\hat K]:Index(j,i)\in\mathcal{S}} \{X_{j,\bm d}^{\mathcal{S}}\}$
\ENDFOR
\ENDFOR
\end{algorithmic}
 \caption{Delivery procedure}\label{alg_delv1}
\end{algorithm} 
\vspace{-.2 in}
\subsection{Rates}
First, observe that the server transmits ${{\hat{K}}\choose {t_2+1}}$ sub-signals to each relay node, each of which has length $\frac{(\frac{1}{r}-\frac{N}{D})}{{{\hat{K}}\choose {t_2}}}F$ bits, thus the transmission rate in bits from the server to each relay node is 
\begin{align}
R_1F={{\hat{K}}\choose {t_2\!+\!1}} \frac{(\frac{1}{r}\!-\!\frac{N}{D})}{{{\hat{K}}\choose {t_2}}}F=\frac{(\hat{K}\!-\!t_2)(\frac{1}{r}\!-\!\frac{N}{D})}{t_2+1}F.
\end{align} 

During the second hop, each relay node forwards ${{\hat{K}-1}\choose {t_2}}$ from its received sub-signals to each of its connected end users. Additionally, it sends $(1-\frac{t_1}{\hat K})\frac{N}{D}F$ bits, from its cache memory to each of its connected end users. Therefore, we have
\begin{align}
R_2F&={{\hat{K}-1}\choose {t_2}} \frac{(\frac{1}{r}\!-\!\frac{N}{D})}{{{\hat{K}}\choose {t_2}}}F+(1-\frac{t_1}{\hat{K}})\frac{N}{D}F=\frac{(\hat{K}-t_2)(\frac{1}{r}\!-\!\frac{N}{D})}{\hat{K}}F+\frac{(\hat{K}-t_1)N}{D\hat{K}}F\nonumber \\
&=\frac{1}{r}\Big(1-\!\frac{t_2}{\hat{K}}\!-\!\frac{(t_1-t_2)Nr}{D\hat{K}}\Big)F=\frac{1}{r}\Big(1\!-\!\frac{M}{D}\Big)F.
\end{align} 
%\frac{F}{r}\!-\!\frac{NF}{D}-\frac{t_2}{\hat{K}}(\frac{1}{r}\!-\!\frac{N}{D})F+\frac{N}{D}F-\frac{t_1N}{D\hat{K}}F=
These findings are presented in the following theorem. 
 \begin{theorem}\label{thm_relay_cache_}
The normalized transmission rates, for $0 \leq N \leq \frac{D}{r}$, $M=\frac{(t_1-t_2)Nr}{\hat{K}}+\frac{t_2D}{\hat{K}}$, $t_1 \!\in\!\{0,1,..,\min(\hat{K},\lfloor\frac{\hat{K}N}{D}\rfloor\}$, and $ t_2\!\in\!\{0,1,..,\hat{K}\}$, are upper bounded by 
\begin{align}\label{nosecraterelay}
R_1\leq \frac{\hat{K}-t_2}{r(t_2+1)}\left(1-\frac{Nr}{D}\right), \quad R_2\leq\frac{1}{r}\left(1-\frac{M}{D}\right).
\end{align}
Furthermore, the convex envelope of these points is achievable.  
\end{theorem}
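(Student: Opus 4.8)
The statement follows directly from the construction already specified by Algorithms~\ref{alg_place1} and~\ref{alg_delv1}, so the plan is to verify, in order, (i) that the placement respects the cache budgets $NF$ at each relay and $MF$ at each user, (ii) that the delivery lets every user reconstruct its requested file for every demand vector $\bm d$, and (iii) that the displayed formulas for $R_1$ and $R_2$ are exactly the lengths of the transmitted signals; the achievability of the convex envelope then comes from memory sharing over the discrete operating points indexed by $(t_1,t_2)$ \cite{maddah2014fundamental}.

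For feasibility, I would count cache bits directly. Relay $\Gamma_j$ stores $\{f_n^{j,1}:n\in[D]\}$, which is $D\cdot\tfrac{NF}{D}=NF$ bits. User $k$ stores, for each of its $r$ incident relays $j$, a $\tfrac{t_1}{\hat{K}}$-fraction of every $f_n^{j,1}$ (that is $\tfrac{t_1}{\hat{K}}\cdot\tfrac{N}{D}F$ bits per file) together with the $\binom{\hat{K}-1}{t_2-1}$ pieces $f_{n,\mathcal{T}}^{j,2}$ having $Index(j,k)\in\mathcal{T}$ (that is $\tfrac{\binom{\hat{K}-1}{t_2-1}}{\binom{\hat{K}}{t_2}}\big(\tfrac{1}{r}-\tfrac{N}{D}\big)F=\tfrac{t_2}{\hat{K}}\big(\tfrac{1}{r}-\tfrac{N}{D}\big)F$ bits per file); summing over the $D$ files and the $r$ relays yields $\tfrac{Nt_1r}{\hat{K}}F+\tfrac{(D-Nr)t_2}{\hat{K}}F=MF$, which matches the parametrization of $M$ in the statement, while $0\le N\le D/r$ and the stated ranges of $t_1,t_2$ make all subfile sizes and caching fractions well-defined and nonnegative.

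For correctness, the point is that the construction decomposes the network into $h$ virtual multicast subnetworks, one per relay. Fix $\Gamma_j$ and re-index $\mathcal{N}(\Gamma_j)$ by $Index(j,\cdot)$. Restricted to the encoded symbols $\{f_n^{j,2}\}$ this is exactly the Maddah-Ali--Niesen setting with subpacketization parameter $t_2$: user $i$ already caches every $f_{d_i,\mathcal{T}}^{j,2}$ with $Index(j,i)\in\mathcal{T}$, and from the XOR codewords $X_{j,\bm d}^{\mathcal{S}}$, $|\mathcal{S}|=t_2+1$, that $\Gamma_j$ forwards to it (those with $Index(j,i)\in\mathcal{S}$) it cancels the already-cached terms and recovers the remaining pieces $f_{d_i,\mathcal{T}}^{j,2}$ with $Index(j,i)\notin\mathcal{T}$, hence all of $f_{d_i}^{j,2}$; together with the missing part $f_{d_i}^{j,1}\setminus f_{d_i,i}^{j,1}$ that $\Gamma_j$ also forwards, user $i$ reconstructs the whole encoded symbol $f_{d_i}^{j}$. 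Running this at the $r$ relays of $\mathcal{N}(U_k)$, user $k$ obtains $r$ distinct symbols $\{f_{d_k}^{j}:j\in\mathcal{N}(U_k)\}$ of the $(h,r)$ MDS code, which by the MDS reconstruction property determine $W_{d_k}$ exactly, so \eqref{reliabeconst} holds with $\epsilon=0$.

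For the rates, the server sends $\Gamma_j$ the $\binom{\hat{K}}{t_2+1}$ signals $X_{j,\bm d}^{\mathcal{S}}$, each of length $\tfrac{(\frac{1}{r}-\frac{N}{D})}{\binom{\hat{K}}{t_2}}F$, so $R_1F=\binom{\hat{K}}{t_2+1}\tfrac{(\frac{1}{r}-\frac{N}{D})}{\binom{\hat{K}}{t_2}}F=\tfrac{\hat{K}-t_2}{t_2+1}\big(\tfrac{1}{r}-\tfrac{N}{D}\big)F=\tfrac{\hat{K}-t_2}{r(t_2+1)}\big(1-\tfrac{Nr}{D}\big)F$. On the second hop $\Gamma_j$ forwards user $i$ the $\binom{\hat{K}-1}{t_2}$ signals $X_{j,\bm d}^{\mathcal{S}}$ with $Index(j,i)\in\mathcal{S}$ plus $(1-\tfrac{t_1}{\hat{K}})\tfrac{N}{D}F$ bits from its cache, so $R_2F=\tfrac{\binom{\hat{K}-1}{t_2}}{\binom{\hat{K}}{t_2}}\big(\tfrac{1}{r}-\tfrac{N}{D}\big)F+\big(1-\tfrac{t_1}{\hat{K}}\big)\tfrac{N}{D}F=\tfrac{\hat{K}-t_2}{\hat{K}}\big(\tfrac{1}{r}-\tfrac{N}{D}\big)F+\big(1-\tfrac{t_1}{\hat{K}}\big)\tfrac{N}{D}F$, and substituting $M=\tfrac{(t_1-t_2)Nr}{\hat{K}}+\tfrac{t_2D}{\hat{K}}$ collapses this to $\tfrac{1}{r}\big(1-\tfrac{M}{D}\big)F$. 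Memory sharing over $(t_1,t_2)$ then yields the convex envelope. I expect the only delicate step to be the decomposition itself: one must check that a single user's budget $MF$ simultaneously covers its prescribed shares for all $r$ incident relays and that the two-layer split $(f^{j,1},f^{j,2})$ with independent parameters $t_1,t_2$ leaves every virtual subnetwork a genuine Maddah-Ali--Niesen instance; the rest is binomial bookkeeping.
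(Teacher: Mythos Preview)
Your proposal is correct and follows essentially the same approach as the paper: the construction of Algorithms~\ref{alg_place1} and~\ref{alg_delv1} is verified for cache feasibility, decodability via the MDS-based decomposition into $h$ virtual Maddah-Ali--Niesen subnetworks, and the rate computations via the same binomial identities, with memory sharing over $(t_1,t_2)$ giving the convex envelope. The only cosmetic difference is that you explicitly flag the simultaneous use of one user's cache across its $r$ incident relays as the delicate step, whereas the paper simply absorbs this into the memory-accounting equation~(7).
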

If $M$ is not in the form of $M=\frac{(t_1-t_2)Nr}{\hat{K}}+\frac{t_2D}{\hat{K}}$, we use memory sharing as in \cite{maddah2014fundamental,karamchandani2014hierarchical}. 

%and \cite{karamchandani2014hierarchical} to achieve the convex envelope of the defined points.  
\vspace{-.05 in}
\begin{remark}
Observe that the caches at the relays help decrease the transmission load only during the first hop, $R_1$. The transmission load over the second hop, $R_2$, depends only on the size of end users' cache memories, $M$, as it is always equal to the complement of the local caching gain divided by the number of relay nodes connected to each end users. 
\end{remark}
\vspace{-.2 in}
\begin{remark}
%\normalfont
It can be seen from (\ref{nosecraterelay}) that when $t_2=\hat{K}$, i.e., $M\geq D-Nr$, we can achieve $R_1=0$. In other words, whenever $M+Nr\geq D$, i.e., the total memory at each end user and its connected relay nodes is sufficient to store the whole file library, the server is not required to transmit during the delivery phase.% Also, note that whenever we have $M<N-Cr$, to minimize the transmission load, $t_1$ is set to zero, as $R_1$ decreases as $t_2$ increases.    
\end{remark}
\vspace{-.1 in}
%\begin{remark}\label{remarklargeN}
%One can observe that for the case where $N\geq \frac{D}{r}$, the users requests can be easily satisfied without the participation of the main server during the delivery phase, i.e., $R_1=0$. To see this, assume that $M\!\!=\frac{tD}{\hat{K}}$, for $t\!\in\!\{0,1,..,\hat{K}\}$, then the scheme will be as follows.  The cache placement scheme is similar to the one described in Section \ref{sec:nonsec} with an additional step, i.e., the server places the pieces $f_{n,\mathcal{T}}^j$, $\forall n, \mathcal{T}$ in the cache memory of relay node $j$. Then, during the delivery phase, after announcing the demand vector, relay node $j$ transmits the pieces $f_{d_k,\mathcal{T}}^j$ to user $k$, $k\in \mathcal{N}(\Gamma_j)$ whenever $Index(j,k) \not \in \mathcal{T}$. In the following, we extended our proposed scheme to the case where $0<N\leq \frac{D}{r}$.
%\end{remark}
When there are no caches at the relays \cite{ji2015caching,ji2015fundamental}, i.e., setting $N=0$ and $t_1=0$, we obtain the following result.% for the case where there is no caches at the relays. 
\vspace{-.1 in}
\begin{corollary}
The normalized transmission rates, for $N=0$, $M\!=\frac{tD}{\hat{K}}$, and $t\!\in\!\{0,1,..,\hat{K}\}$, are upper bounded by 
\begin{align}\label{nosecrate}
R_1\leq\frac{\hat{K}}{r}\left(1-\dfrac{M}{D}\right)\frac{1}{1+\frac{\hat{K}M}{D}}, \ \qquad R_2\leq\frac{1}{r}\left(1-\frac{M}{D}\right).
\end{align}
In addition, the convex envelope of these points is achievable.  
\end{corollary}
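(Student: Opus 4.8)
The plan is to obtain the corollary as a direct specialization of Theorem~\ref{thm_relay_cache_} to the case of cache-free relays. First I would set $N=0$ in the scheme of Section~\ref{sec:nonsec}. Since the admissible range for $t_1$ is $t_1\in\{0,1,\dots,\min(\hat{K},\lfloor \hat{K}N/D\rfloor)\}$, putting $N=0$ forces $t_1=0$; consequently the memory parameter collapses to $M=\frac{(t_1-t_2)Nr}{\hat{K}}+\frac{t_2D}{\hat{K}}=\frac{t_2D}{\hat{K}}$, and I would relabel $t_2$ as $t$, so that $t\in\{0,1,\dots,\hat{K}\}$ exactly as in the statement. In this degenerate regime the part $f_n^{i,1}$ of each MDS-encoded symbol has size $NF/D=0$, so the cache placement of Algorithm~\ref{alg_place1} reduces to partitioning each $f_n^{i,2}=f_n^i$ into $\binom{\hat{K}}{t}$ pieces and loading into $Z_k$ exactly those pieces indexed by subsets containing $Index(j,k)$; this is precisely the Maddah-Ali--Niesen placement applied inside each of the $h$ virtual multicast subnetworks, with the relays carrying no cached content.

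Next I would read off the two rate bounds of \eqref{nosecraterelay} under these substitutions. For the first hop, $R_1\le \frac{\hat{K}-t_2}{r(t_2+1)}\bigl(1-\frac{Nr}{D}\bigr)$ becomes $R_1\le\frac{\hat{K}-t}{r(t+1)}$, and a short manipulation using $M=tD/\hat{K}$, hence $1-\frac{M}{D}=\frac{\hat{K}-t}{\hat{K}}$ and $1+\frac{\hat{K}M}{D}=1+t$, rewrites this as
\begin{equation}
R_1\le \frac{\hat{K}}{r}\Bigl(1-\frac{M}{D}\Bigr)\frac{1}{1+\frac{\hat{K}M}{D}},\nonumber
\end{equation}
which is the claimed expression. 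The second-hop bound $R_2\le\frac{1}{r}\bigl(1-\frac{M}{D}\bigr)$ carries over verbatim, since that bound in \eqref{nosecraterelay} already depends only on $M$ (as noted in the Remark following Theorem~\ref{thm_relay_cache_}).

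Finally, achievability of the convex envelope of the corner points $\{(M,R_1,R_2)\}$ follows by the same memory-sharing argument invoked for Theorem~\ref{thm_relay_cache_}: for a target $M$ lying between two admissible values $M_{t}$ and $M_{t+1}$, the server splits each file proportionally and runs the two corresponding schemes on disjoint parts, yielding the convex combination of the respective rate pairs. I do not expect a genuine obstacle; the only points needing a line of care are (i) confirming that forcing $t_1=0$ when $N=0$ discards no achievable memory value, and (ii) checking that the zero-size split $f_n^{i,1}$ degenerates gracefully in Algorithms~\ref{alg_place1}--\ref{alg_delv1}, both of which are immediate.
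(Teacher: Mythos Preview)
Your proposal is correct and follows essentially the same approach as the paper: the corollary is stated immediately after the observation that setting $N=0$ and $t_1=0$ in Theorem~\ref{thm_relay_cache_} yields the result, and your derivation (forcing $t_1=0$ from the admissible range, relabeling $t_2$ as $t$, and simplifying \eqref{nosecraterelay}) is exactly that specialization spelled out in detail.
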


\begin{remark}\label{remark_resolvable1}
The achievable rates in (\ref{nosecrate}) are the same as the ones in \cite{tang2016coded} which have been shown to be achievable for a special class of combination networks where $r$ divides $h$, i.e., resolvable networks. By our scheme, we have just demonstrated that the resolvability property is not necessary to achieve these rates. Furthermore, it has been shown in \cite{tang2016coded} that, for resolvable networks, these rates outperform the ones in \cite{ji2015caching} and \cite{ji2015fundamental}. Thus, our proposed scheme outperforms the ones in \cite{ji2015caching} and \cite{ji2015fundamental}.
\end{remark}
\begin{remark}
One can see from (\ref{nosecrate}) that the upper bound on $R_1$ is formed by the product of three terms. The first term $\frac{\hat{K}}{r}$ is due the fact that each relay node is connected to $\hat{K}$ end users, each of which is connected to $r$ relay nodes. Thus, each relay node is responsible for $\frac{1}{r}$ of the load on a server that is connected to $\hat{K}$ end users. The second term $(1-\frac{M}{D})$ represents the local caching gain at each end user. The term $\frac{1}{1+\frac{\hat{K}M}{D}}$ represents the global caching gain of the proposed scheme. 
\end{remark}
%\begin{remark}
The merit our proposed scheme is that it allows us to virtually decompose the combination network into a set of sub-networks, each of which in the form of the multicast network \cite{maddah2014fundamental}. In particular, for the case where $N\!=\!0$, each relay node acts as a virtual server with library of $D$ files each of size $F/r$ bits, while each connected end user dedicates $1/r$ from its memory to this library. Therefore, any scheme developed for the classical multicast setup \cite{maddah2014fundamental} which achieves rate $R_{\text{Multicast}}(MF/r,D,\hat{K},F/r)$ can be utilized in the context of combination networks and achieves rate $R_1=R_{\text{Multicast}}$. In other words, for large enough $F$, schemes developed for the cases where the users' demands are non-uniform \cite{niesen2016coded}, the number of user is greater than the number of files, \cite{wan2016caching}, for small values of the end users memories \cite{chen2014fundamental}, utilizing coded prefetching \cite{tian2016caching}, can be adopted in a combination network after the decomposition step via MDS coding. 
%As mentioned earlier, this decomposition strategy turns out to be order-optimal, see Corollary $5$. 

In addition, by applying the proposed decomposition, we can utilize any scheme that is developed for combination networks with no relay caches, $N=0$, in the case where the relays are equipped with cache memories, i.e., $0<N\leq \frac{D}{r}$, as indicated in the following proposition.
\begin{proposition}
 Suppose that the rate pair $R^{N=0}_1(MF,D,K,F)$ and $R^{N=0}_2(MF,D,K,F)$ is known to be achievable in a combination network with no relay caches. Then, for a combination network with relay cache of size $NF$ bits, the rate pair $R_1=R^{N=0}_1(M_1F,D,K,$ $(1-\frac{Nr}{D})F)$ and $R_2=R^{N=0}_2(M_1F,D,K,(1-\frac{Nr}{D})F)+(Nr-M_2)\frac{F}{rD}$ is achievable for any choice of $M_1,M_2\geq 0$ and $M_1+M_2\leq M$.
\end{proposition}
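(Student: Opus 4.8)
The plan is to prove the proposition by exhibiting an explicit reduction: given a scheme for the no-relay-cache setting, we construct a scheme for the relay-cache setting by first ``absorbing'' part of each file into the relay caches via MDS coding, then invoking the known scheme on the residual data, and finally accounting for a small amount of extra second-hop traffic needed to ship the uncached relay-side portions to the users. First I would fix the MDS layer exactly as in Section \ref{sec:nonsecplac}: split each file $W_n$ into $r$ subfiles, encode with an $(h,r)$ MDS code into symbols $f_n^1,\dots,f_n^h$ of size $F/r$ each, and split each $f_n^i$ into $f_n^{i,1}$ of size $\frac{NF}{D}$ bits and $f_n^{i,2}$ of size $(\frac1r-\frac{N}{D})F$ bits. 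Relay $\Gamma_j$ caches $\{f_n^{j,1}:\forall n\}$, which uses exactly $NF$ bits, so the relay memory constraint holds with equality. The key observation is that the leftover symbols $\{f_n^{j,2}:\forall n\}$ that relay $\Gamma_j$ is responsible for constitute, collectively over all $j$, exactly a ``virtual'' no-relay-cache combination network with file size $(1-\frac{Nr}{D})F$ bits per file: any $r$ of the $f_n^{j,2}$'s suffice to complete the reconstruction of $W_n$ once the relay-cached $f_n^{j,1}$'s are added in.

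Next I would split the user memory budget $M$ as $M_1+M_2\le M$ with $M_1,M_2\ge0$, dedicating $M_1F$ bits of each user's cache to the virtual no-relay-cache network on the $f^{j,2}$ symbols and $M_2F$ bits to prefetching portions of the relay-cached $f^{j,1}$ symbols. For the first part, we simply invoke the assumed scheme, which on file size $(1-\frac{Nr}{D})F$ delivers with server-to-relay rate $R_1^{N=0}(M_1F,D,K,(1-\frac{Nr}{D})F)$ and relay-to-user rate $R_2^{N=0}(M_1F,D,K,(1-\frac{Nr}{D})F)$ (these are rates normalized by the file size $F$, matching the proposition's statement). This handles all of $X_{j,\bm d}$ and the corresponding part of $Y_{j,\bm d,i}$. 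For the second part, each user $k$ connected to $\Gamma_j$ must ultimately hold $f_{d_k}^{j,1}$ in full, of which it has prefetched a fraction corresponding to $M_2F$ bits spread over the $r$ connected relays and $D$ files; one checks that a per-relay, per-file prefetch of $\frac{M_2}{rD}F$ bits from each $f_n^{j,1}$ uses exactly $r\cdot D\cdot\frac{M_2}{rD}F=M_2F$ bits, and then relay $\Gamma_j$ must forward the missing $(\frac{N}{D}-\frac{M_2}{rD})F=(Nr-M_2)\frac{F}{rD}$ bits of $f_{d_k}^{j,1}$ to each connected user $k$ (this is pure routing, not coding, since $f^{j,1}$ sits only at $\Gamma_j$). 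Summing the two contributions on the second hop gives $R_2=R_2^{N=0}(M_1F,D,K,(1-\frac{Nr}{D})F)+(Nr-M_2)\frac{F}{rD}$, while the first hop sees only the virtual-network traffic, $R_1=R_1^{N=0}(M_1F,D,K,(1-\frac{Nr}{D})F)$, exactly as claimed.

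The routine parts are the two bit-counting checks (relay memory used is exactly $NF$; user memory used is exactly $M_1F+M_2F\le MF$) and the verification that decoding succeeds: each user recovers every $f^{j,2}_{d_k}$ for $j\in\mathcal N(U_k)$ from the invoked scheme's guarantee, recovers every $f^{j,1}_{d_k}$ from its prefetch plus the routed remainder, hence recovers all $r$ symbols $f^j_{d_k}$ and inverts the MDS code to get $W_{d_k}$. The main obstacle, and the only place requiring care, is making sure the invoked no-relay-cache scheme can legitimately be applied to the $f^{j,2}$ layer: one must argue that this layer really is a combination network with the same parameters $(h,r,K)$ and file size $(1-\frac{Nr}{D})F$ — in particular that the $f^{j,2}$ symbols are mutually independent and uniformly distributed when the $W_n$ are (which follows because an MDS encoding of i.i.d.\ uniform data has uniform, pairwise-independent-to-order-$r$ components, and any scheme for the no-relay-cache model only needs that each $W_n$ decomposes into $h$ symbols of which any $r$ reconstruct it), and that the feasibility constraint on $M_1$ for the invoked scheme is inherited. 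I would close by noting that the secrecy-constrained analogues go through identically, since the MDS absorption step does not touch whatever keys or coding the invoked scheme uses.
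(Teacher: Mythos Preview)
Your proposal is correct and follows essentially the same approach as the paper: split off a relay-cacheable portion of each file via MDS, apply the black-box no-relay-cache scheme to the remainder with $M_1F$ user bits, and route the leftover relay-cached fragments using $M_2F$ prefetched bits. The one cosmetic difference is the order of operations: the paper splits each file \emph{first} into $W_i^1$ (size $\tfrac{Nr}{D}F$) and $W_i^2$ (size $(1-\tfrac{Nr}{D})F$), applies the $(h,r)$ MDS code only to $W_i^1$, and invokes the black-box scheme directly on the genuine sub-library $\{W_i^2\}$; you apply MDS first to the whole file and then split each encoded symbol. These are equivalent when the MDS code acts coordinate-wise, but the paper's order makes the black-box invocation transparent---you simply hand the scheme a library of $D$ files of the right size---and sidesteps the need for your closing argument about independence of MDS components or about what structure an arbitrary no-relay-cache scheme ``only needs'' (a claim that is not true for a general black-box scheme and is in any case unnecessary).
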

\begin{proof}
Split each file of the database, $W_i$ into two subfiles $W_i^1$ of size $\frac{Nr}{D}F$ bits and $W_i^2$ of size $(1-\frac{Nr}{D})F$ bits. Encode each of subfiles $\{W_i^1, \ \forall i\}$ using an $(h,r)$ MDS code. Each encoded symbol is cached by one of the relays. Divide the cache of each end user into two partitions of sizes $M_1F$ and $M_2F$ such that $M_1+M_2\leq M$. The partition of $M_1F$ bits is dedicated to the library formed by the subfiles $\{W_i^2, \ \forall i\}$, for which we apply any caching scheme that is known for a combination networks with no relay caches. The second partition of size $M_2F$ is filled by bits from the memories of relays connected to the end user as explained in subsection \ref{sec:nonsecplac}, leading to the achievable pair in the proposition.
\end{proof}
 %Such approach was utilized in \cite{wan2017novel} to show the extension of the proposed scheme to networks with cache memories at the relays. 
%\end{remark}
\vspace{-.3 in}
\subsection{An Illustrative Example}
We illustrate our proposed scheme by an example. Consider the network depicted in Fig. \ref{Fig:sym}, where $D\!=\!10$, $N=0$ and $M\!=\!\frac{15}{2}$, i.e., $t=3$. This network is not resolvable. 
\subsubsection{Cache Placement Phase}
Each file, $W_n$, is divided into $2$ subfiles. Then, the server encodes them using an $(5,2)$ MDS code. We denote the resulting encoded symbols by $f_n^j$, where $n$ is the file index, i.e., $n=1,..,10$, and $j=1,..,5$. Furthermore, we divide each encoded symbol into $4$ pieces each of size $\frac{F}{8}$ bits, and denoted by    
$f_{n,\mathcal{T}}^j$, where $\mathcal{T}\!\subseteq\! [4]$ and $|\mathcal{T}|\!=\!3$.
The contents of the cache memories at the end users are given in Table I. 
\begin{table}
\begin{center}
\small
 \begin{tabular}{||c| c ||} 
 \hline
 User $i$ & $Z_i$ \\ [0.5ex] 
 \hline\hline
 1 & $\left\{
f_{n,123}^1,f_{n,124}^1,f_{n,134}^1,f_{n,123}^2,f_{n,124}^2,f_{n,134}^2: \forall n  
\right\}$ \\ 
 \hline
 2 &$\left\{ 
f_{n,123}^1,f_{n,124}^1,f_{n,234}^1,f_{n,123}^3,f_{n,124}^3,f_{n,134}^3: \forall n 
\right\}$ \\
 \hline
 3 & $\left\{ 
f_{n,123}^1,f_{n,134}^1,f_{n,234}^1,f_{n,123}^4,f_{n,124}^4,f_{n,134}^4: \forall n 
\right\}$  \\
 \hline
 4 & $\left\{ 
f_{n,124}^1,f_{n,134}^1,f_{n,234}^1,f_{n,123}^5,f_{n,124}^5,f_{n,134}^5:  \forall n  
\right\}$  \\
 \hline
 5 & $\left\{
f_{n,123}^2,f_{n,124}^2,f_{n,234}^2,f_{n,123}^3,f_{n,124}^3,f_{n,234}^3:  \forall n  
\right\}$  \\
 \hline
 6 & $\left\{ 
f_{n,123}^2,f_{n,134}^2,f_{n,234}^2,f_{n,123}^4,f_{n,124}^4,f_{n,234}^4:  \forall n  
\right\}$  \\
 \hline
 7 & $\left\{
f_{n,124}^2,f_{n,134}^2,f_{n,234}^2,f_{n,123}^5,f_{n,124}^5,f_{n,234}^5:  \forall n  
\right\}$  \\
 \hline
 8 & $\left\{
f_{n,123}^3,f_{n,134}^3,f_{n,234}^3,f_{n,123}^4,f_{n,134}^4,f_{n,234}^4:  \forall n  
\right\}$  \\
 \hline
 9 & $\left\{
f_{n,124}^3,f_{n,134}^3,f_{n,234}^3,f_{n,123}^5,f_{n,134}^5,f_{n,234}^5:  \forall n  
\right\}$  \\
 \hline
 10 & $\left\{
f_{n,124}^4,f_{n,134}^4,f_{n,234}^4,f_{n,124}^5,f_{n,134}^5,f_{n,234}^5:  \forall n  
\right\}$ \\ %[1ex] 
 \hline
\end{tabular}
\normalsize
\end{center}
\caption{The cache contents at the end users for $N=K=10$ and $M=\frac{15}{2}$.}
\end{table}
Observe that each user stores $6$ pieces of the encoded symbols of each file, i.e., $\frac{3}{4}F$ bits, which satisfies the memory constraint.  
\subsubsection{Coded Delivery Phase}
Assume that user $k$ requests the file $W_k$, and $k=1,..,10$. The server transmits the following signals 
\begin{equation}\nonumber
X_{1,\bm d}=
f_{4,123}^1\oplus f_{3,124}^1\oplus f_{2,134}^1\oplus f_{1,234}^1  
, \qquad 
X_{2,\bm d}= 
f_{7,123}^2\oplus f_{6,124}^2\oplus f_{5,134}^2\oplus f_{1,234}^2  
,
\end{equation}
\begin{equation}\nonumber
X_{3,\bm d}= 
f_{9,123}^3\oplus f_{8,124}^3\oplus f_{5,134}^3\oplus f_{2,234}^3  
, \qquad
X_{4,\bm d}= 
f_{10,123}^4\oplus f_{8,124}^4\oplus f_{6,134}^4\oplus f_{4,234}^4  
,
\end{equation}
\begin{equation}\nonumber
\mbox{ and }X_{5,\bm d}=
f_{10,123}^5\oplus f_{9,124}^5\oplus f_{7,134}^5\oplus f_{4,234}^5  
.
\end{equation}
%\begin{equation}\nonumber
%X_{1,\bm d}=\begin{Bmatrix} 
%f_{4,123}^1\oplus f_{3,124}^1\oplus f_{2,134}^1\oplus f_{1,234}^1  
%\end{Bmatrix}, \qquad 
%X_{2,\bm d}=\begin{Bmatrix} 
%f_{7,123}^2\oplus f_{6,124}^2\oplus f_{5,134}^2\oplus f_{1,234}^2  
%\end{Bmatrix},
%\end{equation}
%\begin{equation}\nonumber
%X_{3,\bm d}=\begin{Bmatrix} 
%f_{9,123}^3\oplus f_{8,124}^3\oplus f_{5,134}^3\oplus f_{2,234}^3  
%\end{Bmatrix}, \qquad
%X_{4,\bm d}=\begin{Bmatrix} 
%f_{10,123}^4\oplus f_{8,124}^4\oplus f_{6,134}^4\oplus f_{4,234}^4  
%\end{Bmatrix},
%\end{equation}
%\begin{equation}\nonumber
%\mbox{ and }X_{5,\bm d}=\begin{Bmatrix} 
%f_{10,123}^5\oplus f_{9,124}^5\oplus f_{7,134}^5\oplus f_{4,234}^5  
%\end{Bmatrix}.
%\end{equation}
Then, each relay node forwards its received signal to the set of connected users, i.e.,
%\begin{equation}
$ Y_{i,\bm d,k}=X_{i,\bm d}, \  \forall k\in \mathcal{N}(\Gamma_i)$.
%\end{equation}
 The size of each transmitted signal is equal to the size of a piece of the encoded symbols, i.e., $\frac{1}{8}F$. Thus, $R_1=R_2=\frac{1}{8}$. Now, utilizing its memory, user $1$ can extract the pieces $f_{1,234}^1$ and $f_{1,234}^2$ from the signals received from relay nodes $\Gamma_1$ and $\Gamma_2$, respectively. Therefore, user $1$ reconstructs $f_{1}^1$ and $f_{1}^2$, and decodes its requested file $W_1$. Similarly, user $2$ reconstructs $f_{2}^1$ and $f_{2}^3$, then decodes $W_2$, and so on for the remaining users. 
%\subsection{Generalization of the decomposition approach}
%\vspace{-.3 in}
\subsection{Lower Bounds}
Next, we derive genie-aided lower bounds on the delivery load.% as in \cite{maddah2014fundamental}, \cite{ji2015caching} and \cite{ji2015fundamental}. 
\subsubsection{Lower bound on $R_1$}\label{sec:lower_R1} 
Consider a cut that contains $l$ relay nodes, $l \in\{r,..,h\}$, and $s$ end users from the ${l}\choose{r}$ end users who are connected exclusively to these $l$ relay nodes. The remaining end users are served by a genie. Suppose at the first request instance, these $s$ users request the files $W_1$ to $W_s$. Then, at the second request instance, they request the files $W_{s+1}$ to $W_{2s}$, and so on till the request instance $\lfloor D/s \rfloor$. In order to satisfy all users' requests, the total transmission load from the server and the total memory inside the cut must satisfy 
\begin{equation}
H(W_1,..,W_{s\lfloor D/s \rfloor})=s\lfloor D/s \rfloor F \leq \lfloor D/s \rfloor lR_1F+sMF+lNF. 
\end{equation}
Therefore, we can get 
\begin{equation}
R_1 \geq \frac{1}{l}\!\left(s\!-\!\frac{sM+l N}{\lfloor D/s\rfloor}\right).
\end{equation}
%Taking into account all possible choices for $l$ and $s$, we get the following lower bound
%\begin{equation}
%R_1 \geq \max_{l\in\{r,..,h\}} \max_{s\in\! \{1,..,\min(N,{{l}\choose{r}})\}} \frac{1}{l}\!\left(s\!-\!%\frac{sM+l C}{\lfloor N/s\rfloor}\right). 
%\end{equation}
Similar to \cite[Appendix B-A]{ji2015caching}, the smallest number of relay nodes serving a set of $x$ users equals to $u=\min(x+r-1,h)$. Therefore, by the cut set argument, we can get
 \begin{equation}
R_1 \geq   \frac{1}{u}\!\left(x\!-\!\frac{xM+u N}{\lfloor D/x\rfloor}\right). 
\end{equation}
\subsubsection{Lower bound on $R_2$}\label{sec:lower_R2} 
Consider the cut that contains user $k$ only. Assume $N$ request instances such that at instance $i$, user $k$ requests the file $W_i$. Then, we have the following constraint in order to satisfy the user's requests 
\begin{equation}
H(W_1,..,W_{D})=DF\leq D r R_2+MF. 
\end{equation}
Therefore, we can get the following bound on $R_2$
\begin{equation}
R_2\geq \frac{1}{r}\left(1-\frac{M}{D}\right).
\end{equation}
Now, taking into account all possible cuts, we have the following theorem. 
\begin{theorem}\label{thm:lower}
The normalized transmission rates for $0<M+rN\leq D$ are lower bounded by 
\begin{equation}
R_1 \geq \max \left(\max_{l\in\{r,..,h\}} \max_{s\in\! \{1,..,\min(D,{{l}\choose{r}})\}}\frac{1}{l}\!\left(s\!-\!\frac{sM+l N}{\lfloor D/s\rfloor}\right),\!\!\max_{x\in\{1,..,\min(D,K)\}}  \frac{1}{u}\!\left(x\!-\!\frac{xM+u N}{\lfloor D/x\rfloor}\right)\right),
\end{equation}
where $u=\min(x+r-1,h)$, and 
\begin{equation}
R_2\geq \frac{1}{r}\left(1-\frac{M}{D}\right).
\end{equation}
\end{theorem}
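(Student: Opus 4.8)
The plan is to prove the two bounds independently, each by a genie-aided cut-set argument built on a multi-round demand sequence, and then to take the maximum over all admissible cuts; both arguments reduce to a single application of the chain rule together with Fano's inequality applied to the reliability constraint \eqref{reliabeconst}.

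\textbf{Lower bound on $R_1$.} Fix integers $l\in\{r,\dots,h\}$ and $s\in\{1,\dots,\min(D,{l\choose r})\}$. I would select any $l$ relay nodes and any $s$ end users whose full neighborhood of $r$ relays lies inside the selected relay set (there are ${l\choose r}$ such users, so this is possible); every remaining user is supplied its requested files directly by a genie and hence contributes no constraint. Over $\lfloor D/s\rfloor$ successive request rounds, the $s$ in-cut users request in round $j$ the files indexed $(j-1)s+1,\dots,js$, so collectively they must reconstruct the distinct files $W_1,\dots,W_{s\lfloor D/s\rfloor}$. Each in-cut user recovers its round-$j$ file from its own cache, the caches of its $r$ (in-cut) relays, and the relay-to-user signals it receives in round $j$; since every such relay signal $Y_{i,\bm d,k}$ is a function of the server signal $X_{i,\bm d}$ and the relay cache $V_i$, the whole collection $W_1,\dots,W_{s\lfloor D/s\rfloor}$ is, up to a vanishing Fano term, a deterministic function of the $sMF+lNF$ cache bits inside the cut together with the $\lfloor D/s\rfloor\, lR_1F$ bits the server transmits into the cut over all rounds. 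Hence
\[
s\lfloor D/s\rfloor F \;\le\; \lfloor D/s\rfloor\, l R_1 F + s M F + l N F + F\epsilon_F,
\]
with $\epsilon_F\to 0$, which upon rearranging gives $R_1\ge \frac{1}{l}\bigl(s-\frac{sM+lN}{\lfloor D/s\rfloor}\bigr)$.

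\textbf{Second form of $R_1$ and lower bound on $R_2$.} For the alternative expression I would invoke the fact from \cite[Appendix B-A]{ji2015caching} that any $x$ end users are jointly served by at least $u=\min(x+r-1,h)$ relay nodes; running the same cut with those $u$ relays and $x$ users yields $R_1\ge \frac{1}{u}\bigl(x-\frac{xM+uN}{\lfloor D/x\rfloor}\bigr)$ for every $x\in\{1,\dots,\min(D,K)\}$, and taking the maximum over all $(l,s)$ and all $x$ gives the claimed bound on $R_1$. For $R_2$, I would use the degenerate cut consisting of a single user $k$: over $D$ rounds user $k$ requests $W_1,\dots,W_D$ in turn and recovers all of them from its cache ($MF$ bits) and the $r$ relay signals it receives in each round ($DrR_2F$ bits total), so $DF\le D r R_2 F + M F + F\epsilon_F$ and thus $R_2\ge \frac{1}{r}\bigl(1-\frac{M}{D}\bigr)$.

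\textbf{Main obstacle.} The delicate point in both arguments is the bookkeeping inside the single entropy inequality: the cache contents are written once in the placement phase and therefore enter the count only once even though they are reused in all $\lfloor D/s\rfloor$ (resp.\ $D$) rounds, whereas the delivery signals must be counted separately in every round; one must also check that each in-cut user genuinely receives all $r$ of its relay signals from within the cut (guaranteed by restricting to users whose relay neighborhood is contained in the chosen relay set) and that the genie-served users drop out without leaking anything. Once this is set up, the rest is the routine chain-rule/Fano estimate plus a finite maximization, and the hypothesis $0<M+rN\le D$ serves only to situate the bound in the regime where it is not vacuous.
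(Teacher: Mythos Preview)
Your proposal is correct and follows essentially the same genie-aided cut-set argument as the paper: the same choice of cuts ($l$ relays with $s$ of their ${l\choose r}$ exclusively-connected users for $R_1$, a single user for $R_2$), the same multi-round demand sequences, and the same invocation of \cite[Appendix~B-A]{ji2015caching} for the second $R_1$ expression. The only differences are cosmetic---you make the Fano correction explicit and spell out why the relay-to-user signals are determined by the server signals plus relay caches---and one small imprecision: for the second form you need the \emph{existence} of $x$ users served by exactly $u=\min(x+r-1,h)$ relays (so that the cut has only $u$ relays), not merely that every $x$-set needs at least $u$; the cited result gives exactly this.
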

%\vspace{-.2 in}
  In the following three sections, we investigate the cache-aided combination network under three different secrecy requirements.
 %\vspace{-.2 in}  
\section{Coded caching with secure delivery}\label{sec:achdelivery}
First, we examine the system with \textit{secure delivery}. That is, 
we require that any external eavesdropper that observes the transmitted signals during the delivery phase, must not gain any information about the files, i.e., for any $\delta>0$
\begin{equation}\label{secureconst}
I(\mathcal{X},\mathcal{Y};W_1,..W_D)< \delta, 
\end{equation}
where $\mathcal{X}, \mathcal{Y}$ are the sets of transmitted signals by the server and the relay nodes, respectively. 

In order to satisfy (\ref{secureconst}), we place keys in the network caches during the placement phase. These keys are used to encrypt, i.e., one-time pad \cite{shannon1949}, the transmitted signals during the delivery phase as in \cite{sengupta2015fundamental} and \cite{awan2015fundamental}.  
 %\vspace{-.2 in}
\subsection{Cache Placement Phase}
We start by providing a scheme for $M=1+\frac{t_2(D-1)}{\hat{K}}+\frac{(t_1-t_2)r(D-1)N}{\hat{K}(N+\hat{K}-t_1)}$, where $t_1,t_2\!\in\!\{0,1,..,\hat{K}\}$ and $\frac{t_1}{\hat{K}}\leq\frac{N}{D+\hat{K}-t_1}$. Other values of $M$ are achievable by memory sharing. First, the server encodes each file using an MDS code to obtain the encoded symbols $\{f_n^i:i\in[h]\}$. Then, the servers divides each encoded symbol into two parts, $f_n^{i,1}$ with size $\frac{NF}{D+\hat{K}-t_1}$ bits and $f_n^{i,2}$ with size $\frac{F}{r}-\frac{NF}{D+\hat{K}-t_1}$ bits. Second, the server places $f_{n}^{j,1}$, $\forall n$ in the cache memory of relay node $\Gamma_j$. Then, user $k$, with $k \in \mathcal{N}(\Gamma_j)$, caches a random fraction of $\frac{t_1}{\hat{K}}$ bits from $f_{n}^{j,1}$, $\forall n$, which we denote by $f_{n,k}^{j,1}$. On the other hand, $f_n^{i,2}$ is divided into ${{\hat{K}}\choose {t_2}}$ disjoint pieces each of which is denoted by $f_{n,\mathcal{T}}^{i,2}$, where $n$ is the file index, i.e., $n\in[D]$, $i$ is the index of the encoded symbol, $i=1,..,h$, and $\mathcal{T}\!\subseteq[\hat{K}], |\mathcal{T}|\!=\!t_2$. The size of each piece is $\frac{\frac{F}{r}-\frac{NF}{D+\hat{K}-t_1}}{{{\hat{K}}\choose {t_2}}}F$ bits. 
 The server allocates the pieces $f_{n,\mathcal{T}}^{j,2}$, $\forall n$ in the cache memory of user $k$ if $k \in \mathcal{N}(\Gamma_j)$ and $Index(j,k)\in \mathcal{T}$. 

In addition, the server generates $h{{\hat K}\choose{t_2+1}}$ independent keys. Each key is uniformly distributed with length $\frac{\frac{F}{r}-\frac{NF}{D+\hat{K}-t_1}}{{{\hat{K}}\choose {t_2}}}F$ bits. We denote each key by $K_{\mathcal{T}_K}^u$, where $u\!=\!1,..,h$, and $\mathcal{T}_K\subseteq [\hat{K}], |\mathcal{T}_K|=t_2+1$. User $i$ stores the keys $K_{\mathcal{T}_K}^u$, $\forall u\! \in\! \mathcal{N}(U_i)$, whenever $Index(u,i)\in \mathcal{T}_K$. Also, the sever generates the random keys $K_j^i$ each of length $\frac{NF(\hat{K}-t_1)}{(D+\hat{K}-t_1)\hat{K}}$ bits, for $i=1,..,h$ and $j=1,..,\hat{K}$. $K_j^i$ will be cached by relay $i$ and user $k$ with $Index(i,k)=j$. Therefore, the cache contents at the relay nodes and end users are given by 
 \begin{equation}
V_k\!= \left\{f_{n}^{k,1}, K_j^k:  \ \forall n, j \right\},
 \end{equation}
% \vspace{-.5 in}
\begin{align}
Z_k=&\left\{f_{n,k}^{j,1}, K_l^j,f_{n,\mathcal{T}}^{j,2},K_{\mathcal{T}_K}^j\!:\forall n, \forall j \!\in\! \mathcal{N}(U_k), Index(j,k)\!\in\! \mathcal{T}, \mathcal{T}_K, Index(j,k)\!=\!l \right\}.%\nonumber \\
%& \qquad \qquad \qquad \qquad  \cup \{: k\!\in\! \mathcal{N}(\Gamma_j), \ Index(j,k)=l, \forall n\}.
\end{align}

 % It can be verified that this allocation satisfies the memory constraint, and yields $t=\frac{\hat{K}(M-1)}{D-1}.$ 
%  \begin{equation}
%t=\dfrac{\hat{K}(M-1)}{N-1}.
%\end{equation}
%\begin{remark}
%To ensure the satisfaction of the memory constraints, we calculate the accumulated number of bits in each cache. 
The accumulated number of bits cached by each relay is given by 
\begin{align}\label{memacc11r}
D\frac{NF}{D+\hat{K}-t_1}+\hat{K}\frac{NF(\hat{K}-t_1)}{(D+\hat{K}-t_1)\hat{K}} =\frac{DNF+NF\hat{K}-NFt_1}{D+\hat{K}-t_1}=NF.
\end{align}
The accumulated of bits at each end user is given by 
\begin{align}\label{memacc11}
&\frac{Dr{{\hat K-1}\choose{t_2-1}}}{{{\hat K}\choose{t_2}}}|f_{n}^{i,2}|+\frac{r{{\hat K-1}\choose{t_2}}F}{{{\hat K}\choose{t_2}}}|f_{n}^{i,2}|+\frac{Drt_1}{\hat{K}}|f_{n}^{i,1}|+\frac{r(\hat{K}-t_1)}{\hat{K}}|f_{n}^{i,1}|\nonumber \\ &\quad=\frac{Drt_2}{\hat{K}}|f_{n}^{i,2}|+\frac{r(\hat{K}-t_2)}{\hat{K}}|f_{n}^{i,2}|+\frac{Drt_1}{\hat{K}}|f_{n}^{i,1}|+\frac{r(\hat{K}-t_1)}{\hat{K}}|f_{n}^{i,1}|
\nonumber \\ 
%&\quad=\frac{(D-1)rt_2}{\hat{K}}|f_{n}^{i,2}|+r|f_{n}^{i,2}|+\frac{(D-1)rt_1}{\hat{K}}|f_{n}^{i,1}|+r|f_{n}^{i,1}|\nonumber \\ 
&\quad= F+\frac{t_2(D-1)F}{\hat{K}}+\frac{(t_1-t_2)r(D-1)NF}{\hat{K}(N+\hat{K}-t_1)}=MF,
\end{align}
thus satisfying the memory constraints on all caches. 
%i.e., the proposed scheme satisfies the cache capacity constraint. Also, $ t=\frac{\hat{K}(M-1)}{N-1}$.
%\end{equation}
%%\normalsize
%\end{remark}
%\vspace{-0.2 in}
\subsection{Coded Delivery Phase}
At the beginning of the delivery phase, the demand vector $\bm d$ is announced in the network. For each relay node $\Gamma_j$, at each transmission instance, we consider $\mathcal{S}\subseteq[\hat{K}]$, where $|\mathcal{S}|=t_2+1$. For each $\mathcal{S}$, the server sends to the relay node $\Gamma_j$, the signal 
\begin{equation}
X_{j,\bm d}^{\mathcal{S}}=K^j_{\mathcal{S}} \bigoplus_{\{i:i \in \mathcal{N}(\Gamma_j),\ Index(j,i)\in \mathcal{S}\}} f^{j}_{d_i,\mathcal{S} \setminus\{Index(j,i)\}}.
\end{equation}
In total, the server transmits to $\Gamma_j$, the signal $X_{j,\bm d}=\bigcup_{\mathcal{S}\subseteq[\hat K]:|\mathcal{S}|=t_2+1} \{X_{j,\bm d}^{\mathcal{S}}\}.$
%\begin{equation}
%X_{j,\bm d}=\bigcup_{\mathcal{S}\subseteq\{1,..,\hat K\}:|\mathcal{S}|=t+1} \{X_{j,\bm d}^{\mathcal{S}}\}.
%\end{equation}   
Then, $\Gamma_j$ forwards the signal $X_{j,\bm d}^{\mathcal{S}}$ to user $i$ whenever $Index(j,i)\in\mathcal{S}$. In addition, the relay $\Gamma_j$ sends $f_{d_i}^{j,1} \setminus f_{d_i,i}^{j,1}$ to user $i$ encrypted by the key $K_k^j$ such that $Index(j,i)=k$, i.e., we have  
\begin{align}
Y_{j,\bm d,i}=\Big\{ K_k^j\oplus\{f_{d_i}^{j,1} \setminus f_{d_i,i}^{j,1}\}\Big\} \bigcup_{\mathcal{S}\subseteq[\hat K]:|\mathcal{S}|=t_2+1,Index(j,i)\in\mathcal{S}} \big\{X_{j,\bm d}^{\mathcal{S}}\big\}.
\end{align} 
First, user $i$ can decrypt its received signals using the cached keys. Then, it can recover the pieces $\left\{f^{j,2}_{d_i,\mathcal{T}}: \mathcal{T}\subseteq [\hat{K}]\setminus \{Index(j,i)\} \right\}$ from the signals received from $\Gamma_j$, utilizing its cache's contents. With its cached contents, user $i$ can recover $f^{j,2}_{d_i}$. In addition, user $i$ directly gets $f_{d_i}^{j,1}$ from its the signal transmitted by relay $j$. Thus, it can obtain $f^{j}_{d_i}$. Since, user $i$ receives signals from $r$ different relay nodes, it can obtain the encoded symbols $f^{j}_{d_i}$, $\forall j \in \mathcal{N}(U_i)$, and is able to successfully reconstruct its requested file $W_{d_i}$. 
\begin{remark}\label{secremark}
In total, the server sends $h{{\hat{K}}\choose {t_2+1}}$ signals, each of which is encrypted using a one-time pad that has length equal to the length of each subfile ensuring prefect secrecy \cite{shannon1949}. Observing any of the transmitted signals without knowing the encryption key will not reveal any information about the database files \cite{shannon1949}. The same applies for the messages transmitted by the relays. Thus, (\ref{secureconst}) is satisfied.
\end{remark}
%\vspace{-.2 in}
\subsection{Secure Delivery Rates} 
Denote the secure delivery rates in the first and second hop with $R_1^s$ and $R_2^s$, respectively. Each relay node is responsible for ${{\hat{K}}\choose {t_2+1}}$ transmissions, each of length $\frac{|f_{n}^{i,2}|}{{{\hat{K}}\choose {t_2}}}$, thus the transmission rate in bits from the server to each relay node is 
\begin{align}
R_1^sF=\frac{{{\hat{K}}\choose {t_2+1}}}{{{\hat{K}}\choose {t_2}}}|f_{n}^{i,2}|\!=\!\frac{\hat{K}-t_2}{(t_2+1)}\left(\frac{F}{r}-\frac{NF}{D+\hat{K}-t_1}\right).
\end{align} 
 $\Gamma_j$ forwards ${{\hat{K}-1}\choose {t_2}}$ from its received signals to each connected end users. In addition, it transmits a message of size $\frac{NF(\hat{K}-t_1)}{(D+\hat{K}-t_1)\hat{K}}$ bits from its cached contents to each user, thus we have 
\begin{align}
R_2^sF&=\frac{{{\hat{K}-1}\choose {t_2}}}{{{\hat{K}}\choose {t_2}}}|f_{n}^{i,2}|+\frac{NF(\hat{K}-t_1)}{(D+\hat{K}-t_1)\hat{K}}=\left(1-\frac{t_2}{\hat{K}}\right)\left(\frac{F}{r}-\frac{NF}{D+\hat{K}-t_1}\right)+\frac{NF(\hat{K}-t_1)}{(D+\hat{K}-t_1)\hat{K}}\nonumber\\ 
&=\frac{F}{r}\left(1-\frac{M-1}{N-1}\right).
\end{align} 
We thus obtain the following theorem.
\begin{theorem}
The normalized transmission rates with secure delivery, for $N\geq0$, $M=1+\frac{t_2(D-1)}{\hat{K}}+\frac{(t_1-t_2)r(D-1)N}{\hat{K}(N+\hat{K}-t_1)}$, $t_1,t_2\!\in\!\{0,1,..,\hat{K}\}$ and $\frac{t_1}{\hat{K}}\leq\frac{N}{D+\hat{K}-t_1}$, are upper bounded by 
\begin{align}
R_1^s\leq\!\frac{\hat{K}-t_2}{r(t_2+1)}\left(1-\frac{Nr}{D+\hat{K}-t_1}\right), \qquad R_2^s\leq\frac{1}{r}\left(1-\frac{M-1}{D-1}\right).
\end{align}
In addition, the convex envelope of these points is achievable by memory sharing.  
\end{theorem}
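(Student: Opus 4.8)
The plan is to prove the theorem by verifying that the explicit scheme laid out in the two preceding subsections — MDS-encode each file, use the stated relay/user placement with embedded one-time-pad keys, and deliver the key-XORed multicast signals — is feasible, decodable, and secure, and then to read off the two rate expressions from the bit counts; memory sharing across admissible pairs $(t_1,t_2)$ then gives the convex envelope exactly as in \cite{maddah2014fundamental}. I would organize the argument into four checks.

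First, feasibility of the placement for a fixed admissible $(t_1,t_2)$: each MDS symbol $f^i_n$ has size $F/r$ and is split into $f^{i,1}_n$ of size $\frac{NF}{D+\hat K-t_1}$ and $f^{i,2}_n$ of size $\frac Fr-\frac{NF}{D+\hat K-t_1}$, which is nonnegative under the hypothesis $N\le D/r$, and the constraint $\frac{t_1}{\hat K}\le\frac{N}{D+\hat K-t_1}$ is precisely what lets a user store a $t_1/\hat K$ fraction of each $f^{j,1}_n$ within budget. Summing the bit counts of the relay content $\{f^{j,1}_n:\forall n\}$ together with the keys $\{K^i_j\}$ recovers exactly $NF$, which is \eqref{memacc11r}; summing the user's fractional copies of $f^{j,1}_n$, the pieces $f^{j,2}_{n,\mathcal T}$ with $Index(j,k)\in\mathcal T$, and the cached keys recovers $MF$, which is \eqref{memacc11}.

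Second, decodability: for relay $\Gamma_j$ and each $\mathcal S$ of size $t_2+1$, a user $i$ with $Index(j,i)\in\mathcal S$ receives $X^{\mathcal S}_{j,\bm d}$, strips the key $K^j_{\mathcal S}$, and cancels every interfering term $f^{j,2}_{d_{i'},\mathcal S\setminus\{Index(j,i')\}}$ because $Index(j,i)$ lies in that index set and the piece is cached, leaving $f^{j,2}_{d_i,\mathcal S\setminus\{Index(j,i)\}}$; ranging over all valid $\mathcal S$ and combining with its cached $f^{j,2}$-pieces and the separately encrypted bits of $f^{j,1}_{d_i}$, the user reconstructs $f^j_{d_i}$, and $r$ such symbols collected from its $r$ neighbors reconstruct $W_{d_i}$ by the MDS property, so \eqref{reliabeconst} holds with $\epsilon=0$. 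Third, security: every one of the $h\binom{\hat K}{t_2+1}$ server transmissions and every relay transmission of missing $f^{j,1}$ bits is XORed with an independent uniform key of equal length that occurs in no other signal, so Shannon's one-time-pad theorem \cite{shannon1949}, applied signal-by-signal and combined with mutual key independence, shows the transcript $(\mathcal X,\mathcal Y)$ is independent of the files and \eqref{secureconst} holds with $\delta=0$. Fourth, the rates: the server sends $\binom{\hat K}{t_2+1}$ subsignals of length $|f^{i,2}_n|/\binom{\hat K}{t_2}$ to each relay, giving the stated $R_1^s$; each relay forwards $\binom{\hat K-1}{t_2}$ of them plus $\frac{NF(\hat K-t_1)}{(D+\hat K-t_1)\hat K}$ encrypted bits per connected user, and substituting the value of $M$ collapses the sum to $R_2^s=\frac1r\bigl(1-\frac{M-1}{D-1}\bigr)$.

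The step I expect to be the real work is the algebraic bookkeeping in the first and fourth checks — showing that the rational bit-count sums for the user memory and for $R_2^s$ telescope to the advertised closed forms despite the nonstandard $f^{i,1}/f^{i,2}$ split and the simultaneous dependence on $t_1$ and $t_2$. A secondary subtlety is making the one-time-pad argument airtight: because a file piece can reappear under different keys in signals destined for different users, I would verify that the relevant keys are jointly independent, so that no linear combination of the transmitted signals can leak information about the database.
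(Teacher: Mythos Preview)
Your proposal is correct and mirrors the paper's own argument essentially step for step: the paper establishes the theorem by exhibiting exactly this MDS-based placement with one-time-pad keys, checking the relay and user memory accounting (your references to \eqref{memacc11r} and \eqref{memacc11}), arguing decodability via XOR-cancellation and the MDS property, invoking Shannon's one-time-pad result for \eqref{secureconst}, and then reading off $R_1^s$ and $R_2^s$ from the bit counts before appealing to memory sharing. The only addition in your write-up is the explicit caveat about joint key independence when a piece recurs in multiple encrypted signals, which the paper leaves implicit in its one-sentence security remark; this is a sound refinement but not a departure in approach.
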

For the special case of no caches at the relays, i.e., $N=0$, $t_1=0$, we obtain the following upper bound on the secure delivery rates.  
\begin{corollary}
The normalized transmission rates with secure delivery, for $N=0$, $M\!=\!1\!+\!\frac{t(D\!-\!1)}{\hat{K}}$, and $t\!\in\!\{0,1,..,\hat{K}\}$, are upper bounded by 
\begin{align}
R_1^s\leq\!\frac{\hat{K}\left(1-\frac{M-1}{D-1}\right)}{r\left(\hat{K}\frac{M-1}{D-1}+1\right)}, \qquad R_2^s\leq\frac{1}{r}\left(1-\frac{M-1}{D-1}\right).
\end{align}
In addition, the convex envelope of these points is achievable by memory sharing.  
\end{corollary}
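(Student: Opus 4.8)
The plan is to obtain this statement as a direct specialization of the preceding theorem with $N=0$ and $t_1=0$, followed by a change of parameter. First I would verify that the hypotheses of the theorem survive this substitution. The side condition $\frac{t_1}{\hat{K}}\leq\frac{N}{D+\hat{K}-t_1}$ becomes $0\leq 0$ and is therefore met for every $t_2\in\{0,1,\dots,\hat{K}\}$; and the memory expression $M=1+\frac{t_2(D-1)}{\hat{K}}+\frac{(t_1-t_2)r(D-1)N}{\hat{K}(N+\hat{K}-t_1)}$ collapses to $M=1+\frac{t(D-1)}{\hat{K}}$ once we rename $t=t_2$, matching the corollary's parametrization. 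In particular this gives the identity $\frac{M-1}{D-1}=\frac{t}{\hat{K}}$, equivalently $t=\hat{K}\frac{M-1}{D-1}$, which is the only algebraic fact needed in the rewrite.

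Next I would substitute into the theorem's bounds. In the bound on $R_1^s$ the factor $1-\frac{Nr}{D+\hat{K}-t_1}$ equals $1$, leaving $R_1^s\leq\frac{\hat{K}-t}{r(t+1)}$; writing $\hat{K}-t=\hat{K}\bigl(1-\frac{M-1}{D-1}\bigr)$ and $t+1=\hat{K}\frac{M-1}{D-1}+1$ yields precisely $R_1^s\leq\frac{\hat{K}\left(1-\frac{M-1}{D-1}\right)}{r\left(\hat{K}\frac{M-1}{D-1}+1\right)}$. The bound $R_2^s\leq\frac{1}{r}\bigl(1-\frac{M-1}{D-1}\bigr)$ is already the form stated in the theorem, so it carries over verbatim. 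Achievability of the convex envelope for values of $M$ not of the form $1+\frac{t(D-1)}{\hat{K}}$ then follows from the same memory-sharing argument invoked in the theorem and in \cite{maddah2014fundamental}.

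For completeness I would also point out the self-contained reading of the construction in the degenerate regime: when $N=0$ the piece $f_n^{i,1}$ has size $\frac{NF}{D+\hat{K}-t_1}=0$ and the relay-side keys $K_j^i$ disappear, so the scheme reduces to the MDS decomposition of Section~\ref{sec:nonsec} in which each relay $\Gamma_j$ acts as a virtual multicast server for $\hat{K}$ users over a library of $D$ files of size $F/r$, with the $h\binom{\hat{K}}{t+1}$ one-time pads implementing, per virtual server, the secure-delivery scheme of \cite{sengupta2015fundamental,awan2015fundamental}; counting the encrypted XORs transmitted on each hop reproduces $R_1^sF=\frac{\binom{\hat{K}}{t+1}}{\binom{\hat{K}}{t}}\frac{F}{r}$ and $R_2^sF=\frac{\binom{\hat{K}-1}{t}}{\binom{\hat{K}}{t}}\frac{F}{r}$, i.e.\ the same two expressions. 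There is essentially no obstacle here: the only genuine check is that setting $N=t_1=0$ does not break any cache-size or feasibility constraint, which holds because every relay-side quantity in the general scheme vanishes, and the rest is the elementary re-parametrization via $\frac{M-1}{D-1}=\frac{t}{\hat{K}}$.
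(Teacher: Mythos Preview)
Your proposal is correct and follows exactly the route the paper takes: the corollary is obtained by setting $N=0$ and $t_1=0$ in the preceding theorem and then rewriting the bounds via $\frac{M-1}{D-1}=\frac{t}{\hat{K}}$. Your additional paragraph interpreting the degenerate construction is sound but goes beyond what the paper states, which simply notes the specialization.
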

%We note that, if $M$ is not in the form of $1\!+\!\frac{t(N\!-\!1)}{\hat{K}}$, we apply memory sharing as in \cite{maddah2014fundamental} for achievability.  

\begin{remark}\label{secremarkzerorates}
Under secure delivery, we place keys in the memories of both end user and relays, i.e., we divide the cache between storing data and keys. Observe that the rate of the second hop is the complement of the \textit{data} caching gain of end user and is determined by $M$ only. In addition, whenever $M\geq D$, each user can cache the entire library and there is no need for caching keys as $R_1^s=R_2^s=0$.
On the other hand, the rate of the first hop $R_1^s$ is affected by both $M$ and $N$. We achieve zero rate over the first hop whenever $t_2=\hat{K}$, i.e., we need $M\geq D-\frac{(D-1)Nr}{\hat{K}+N}$. 
\end{remark}

\section{Combination networks with secure caching}\label{sec:achconf}
%\vspace{-.07 in} 
Next, we consider \textit{secure caching}, i.e., an end user must be able to recover its requested file, and must \textit{not} be able to obtain any information about the remaining files, i.e., for $\delta\!>\!0$
\begin{equation}\label{secrtiveconst}
\max_{\bm d,\mathcal{V}} I(\bm W_{-d_k}; \{Y_{j,\bm d,k}:j\in\mathcal{N}(U_k)\},Z_k)< \delta, 
\end{equation}
where $\bm W_{-d_k}\!=\!\{W_1,..,W_N\}\backslash \{W_{d_k}\}$, i.e., the set of all files except the one requested by user $k$.

In our achievability, we utilize secret sharing schemes \cite{secretsharing} to ensure that no user is able to obtain information about the files from its cached contents. The basic idea of the secret sharing schemes is to encode the secret in such a way that accessing a subset of shares does not suffice to reduce the uncertainty about the secret. For instance, if the secret is encoded into the scaling coefficient of a line equation, the knowledge of one point on the line does not reveal any information about the secret as there remain infinite number of possibilities to describe the line. One can learn the secret only if two points on the line are provided. 

In particular, we use a class of secret sharing scheme known as \textit{non-perfect secret sharing schemes}, defined as follows. 
\begin{Definition}
\cite{secretsharing} \cite{blakley1984security} For a secret $W$ with size $F$ bits, an $(m,n)$ \textit{non-perfect secret sharing scheme} generates $n$ shares, $S_1,S_2,..S_n$, such that accessing any $m$ shares does not reveal any information about the file $W$, i.e., 
\begin{equation}
I(W;\mathcal{S})=0, \quad \forall \mathcal{S}\subseteq\{S_1,S_2,..S_n\}, |\mathcal{S}|\leq m.
\end{equation}
Furthermore, $W$ can be losslessly reconstructed from the $n$ shares, i.e.,
\begin{equation}
H(W|S_1,S_2,..,S_n)\!\!=\!0.
\end{equation}  
\end{Definition}
For large enough $F$, an $(m,n)$ secret sharing scheme exists with shares of size equal to $\frac{F}{n-m}$ bits \cite{blakley1984security,secretsharing}.% The basic idea of the secret sharing schemes is that the information gained by accessing a set of $m$ shares is not sufficient to decrease the uncertainty about the secret. For instance, if the secret is encoded into the coefficients of a line equation. To decode the secret, it suffices to know any two points on this line. On the other hand, the knowledge of one point on the line does not reveal any information about the secret as there are infinite number of lines pass through any point in the 2-D space.
 
\subsection{Cache Placement Phase}\label{conf_cache}
%\vspace{-.07 in} 
Again, as a first step, the server divides each file into $r$ equal-size subfiles. Then, it encodes them using an $(h,r)$ maximum distance separable (MDS) code. We denote by $f_n^i$ the resulting encoded symbol, where $n$ is the file index and $i=1,2,..,h$. 
For $M=\frac{tD}{\hat{K}-t}(1-\frac{Nr}{D})$ and $t\!\in\!\{0,1,..,\hat{K}\!-\!1\}$, we divide each encoded symbol into two parts, $f_n^{i,1}$ with size $\frac{NF}{D}$ bits and $f_n^{i,2}$ with size $\frac{F}{r}-\frac{NF}{D}$ bits. The parts $\{f_n^{i,1}: \ \forall n\}$ will be cached in the memory of relay $\Gamma_i$ and will not be cached by any user.

 Each of the symbols $f_n^{i,2}$ is encoded using a $\left({{\hat{K}-1}\choose {t-1}},\!{{\hat{K}}\choose {t}}\!\right)$ secret sharing scheme from \cite{blakley1984security,secretsharing}. The resulting shares are denoted by $S_{n,\mathcal{T}}^j$, where $n$ is the file index i.e., $n\in\{1,..,N\}$, $j$ is the index of the encoded symbol, i.e., $j=1,..,h$, and $\mathcal{T}\subseteq [\hat{K}], |\mathcal{T}|=t$. Each share has size
\begin{equation}
F_s=\frac{\frac{F}{r}-\frac{NF}{D}}{{{\hat{K}}\choose {t}}-{{\hat{K}-1}\choose {t-1}}}=\frac{t\left(1-\frac{Nr}{D}\right)}{r(\hat{K}-t){{\hat{K}-1}\choose {t-1}}}F \mbox{ bits}.
\end{equation} 
The server allocates the shares $S_{n,\mathcal{T}}^j$, $\forall n$ in the cache of user $k$ whenever $j \in \mathcal{N}(U_k)$ and $Index(j,k) \in \mathcal{T}$. Therefore, at the end of cache placement phase, the contents of the cache memory at relay $j$ and user $k$ are given by 
 \begin{equation}
V_j\!=\!\left\{f_{n}^{j,1}:  \ \forall n \right\},  \qquad Z_k=\left\{S_{n,\mathcal{T}}^j: k\!\in\! \mathcal{N}(\Gamma_j),\  Index(j,k)\in \mathcal{T}, \ \forall n\right\}.
 \end{equation}
%\begin{equation}

%\end{equation}
%Such allocation agrees with the memory capacity constraint, and gives $t=\frac{KrM}{h(D+M)}$.
%\begin{equation}
%t=\dfrac{\hat{K}M}{N+M}=\dfrac{KrM}{h(N+M)}.
%\end{equation}
\begin{remark}
Each user stores $Dr{{\hat K-1}\choose{t-1}}$ shares, thus the accumulated number of bits stored in each cache memory is 
\begin{align}\label{memacc2}
Dr&{{\hat K-1}\choose{t-1}}\frac{t\left(1-\frac{Nr}{D}\right)}{r(\hat{K}-t){{\hat{K}-1}\choose {t-1}}}F=\frac{tD}{\hat{K}-t}\left(1-\frac{Nr}{D}\right)F=MF.
\end{align}
Clearly, the proposed scheme satisfies the cache capacity constraint at both relays and end users. Furthermore, from (\ref{memacc2}), we can get $t=\frac{\hat{K}M}{D+M-Nr}$.
%%%\small
%\begin{equation}
% t=\frac{\hat{K}M}{D+M-Nr}.
%\end{equation}
%%%%\normalsize
\end{remark}
%\vspace{-0.25 in}
\subsection{Coded Delivery Phase}
At the beginning of the delivery phase, each user requests a file from the server. First, we focus on the transmissions from the server to $\Gamma_j$. At each transmission instance, we consider $\mathcal{S}\subseteq[\hat{K}]$, where $|\mathcal{S}|=t+1$. For each $\mathcal{S}$, the server transmits the following signal to $\Gamma_j$  
\begin{equation}\label{rece_conf1}
X_{j,\bm d}^{\mathcal{S}}=\bigoplus_{\{i:i \in \mathcal{N}(\Gamma_j),\ Index(j,i)\in \mathcal{S}\}} S^{j}_{d_i,\mathcal{S} \setminus\{Index(j,i)\}}.
\end{equation}
In total, the server transmits to $\Gamma_j$, the signal $X_{j,\bm d}=\bigcup_{\mathcal{S}\subseteq[\hat K]:|\mathcal{S}|=t+1} \{X_{j,\bm d}^{\mathcal{S}}\}.$
%\begin{equation}\label{rece_conf2}
%X_{j,\bm d}=\bigcup_{\mathcal{S}\subseteq\{1,..,\hat K\}:|\mathcal{S}|=t+1} \{X_{j,\bm d}^{\mathcal{S}}\}.
%\end{equation}  
 Then, $\Gamma_j$ forwards the signal $X_{j,\bm d}^{\mathcal{S}}$ to user $i$ whenever $Index(j,i)\in\mathcal{S}$. In addition, $\Gamma_j$ sends directly $f_{d_i}^{j,1}$ to user $i$. Therefore, we have 
 \begin{align}
Y_{j,\bm d,i}=\{ f_{d_i}^{j,1} \} \bigcup_{\mathcal{S}\subseteq[\hat K]:|\mathcal{S}|=t+1,Index(j,i)\in\mathcal{S}} \big\{X_{j,\bm d}^{\mathcal{S}}\big\}.
\end{align}
 
  User $i$ can recover $\{S^{j}_{d_i,\mathcal{T}}: \mathcal{T}\subseteq [\hat{K}]\setminus \{Index(j,i)\}, |\mathcal{T}|=t \}$ from the signals received from $\Gamma_j$, utilizing its cache's contents. Adding these shares to the ones in its cache, i.e., $S^{j}_{d_i,\mathcal{T}}$ with $Index(j,i)\in\mathcal{T}$, user $i$ can decode the encoded symbol $f^{j,2}_{d_i}$ from its ${{\hat{K}}\choose{t}}$ shares.
Since, user $i$ receives signals from $r$ different relay nodes, it obtains the encoded symbols $f^{j}_{d_i}$, $\forall j \in \mathcal{N}(U_i)$, and can reconstruct $W_{d_i}$.
\subsection{Secure Caching Rates}
Under secure caching requirement, we denote the first and second hop rates as $R_1^c$ and $R_2^c$, respectively. Since, each relay node is responsible for ${{\hat{K}}\choose {t+1}}$ transmissions, each of length $F_s$, the transmission rate, in bits, from the server to each relay node is 
\begin{align}
R_1^cF\!&=\frac{t{{\hat{K}}\choose {t+1}}\left(1-\frac{Nr}{D}\right)F}{r(\hat{K}-t){{\hat{K}-1}\choose {t-1}}} = \frac{\hat{K}\left(1-\frac{Nr}{D}\right)F}{r(t\!+\!1)}=\frac{\hat{K}(D\!+\!M-rN)}{r\left(\!(\hat{K}\!+\!1)M\!+\!D\!-rN\right)}\left(1-\frac{Nr}{D}\right)F.
\end{align} 
Then, each relay forwards ${{\hat{K}-1}\choose {t}}$ from these signals to each of its connected end users. In addition, each relay forwards $\frac{NF}{D}$ bits from its cache to each of these users, therefore
\begin{align}
R_2^cF={{\hat{K}-1}\choose {t}}\frac{t\left(1-\frac{Nr}{D}\right)}{r(\hat{K}-t){{\hat{K}-1}\choose {t-1}}}F+\frac{NF}{D}=\frac{1}{r}F.
\end{align} 
Consequently, we have the following theorem.
\begin{theorem}\label{securecach}
The normalized rates with secure caching, for $0\leq N\leq \frac{D}{r}$, $M\!\!=\!\!\frac{tD}{\hat{K}-t}(1-\frac{Nr}{D})$, and $t\!\in\!\{0,1,..,\hat{K}\!-\!1\}$, are upper bounded by
\begin{align}\label{securecacherate1}
R_1^c\leq \frac{\hat{K}(D\!+\!M-rN)}{r\left(\!(\hat{K}\!+\!1)M\!+\!D\!-rN\right)}\left(1-\frac{Nr}{D}\right), \qquad R_2^c\leq\frac{1}{r}.
\end{align}
The convex envelope of these points is achievable by memory sharing.  
\end{theorem}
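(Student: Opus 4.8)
The plan is to establish, in this order: (i) feasibility of the cache allocation, (ii) decodability of every demand vector, (iii) the secure‑caching constraint (\ref{secrtiveconst}), and (iv) the rate expressions; the convex envelope then follows by memory sharing as in \cite{maddah2014fundamental}. Throughout, $F$ is taken large enough for an $\left(\binom{\hat K-1}{t-1},\binom{\hat K}{t}\right)$ non‑perfect secret‑sharing scheme with share size $F_s$ to exist. Feasibility at a relay is immediate: $\Gamma_j$ stores $\{f_n^{j,1}:\forall n\}$, i.e.\ $D$ pieces of $\frac{NF}{D}$ bits, totalling $NF$; feasibility at a user is exactly the count in (\ref{memacc2}). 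For decodability, fix a user $i$ and a relay $j\in\mathcal N(U_i)$: the cache of $i$ contains the $\binom{\hat K-1}{t-1}$ shares $S^j_{d_i,\mathcal T}$ with $Index(j,i)\in\mathcal T$, and for each of the remaining $\binom{\hat K-1}{t}$ subsets $\mathcal T\subseteq[\hat K]$, $|\mathcal T|=t$, $Index(j,i)\notin\mathcal T$, setting $\mathcal S=\mathcal T\cup\{Index(j,i)\}$ every term of $X^{\mathcal S}_{j,\bm d}$ other than $S^j_{d_i,\mathcal T}$ carries $Index(j,i)$ in its subscript set and is therefore already cached by $i$, so $i$ recovers $S^j_{d_i,\mathcal T}$. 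With all $\binom{\hat K}{t}$ shares in hand, the reconstruction property returns $f^{j,2}_{d_i}$, which combined with the forwarded $f^{j,1}_{d_i}$ gives $f^j_{d_i}$; collecting the $r$ symbols $\{f^j_{d_i}:j\in\mathcal N(U_i)\}$ and applying the MDS property recovers $W_{d_i}$, so (\ref{reliabeconst}) holds with zero error.

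The heart of the argument is (\ref{secrtiveconst}), which I would prove by isolating the two sources of information available to user $k$, its cache $Z_k$ and the received signals $\{Y_{j,\bm d,k}:j\in\mathcal N(U_k)\}$, and showing neither reveals anything about $\bm W_{-d_k}$. For the cache: for every file $n$ and every $j\in\mathcal N(U_k)$, $Z_k$ contains exactly the $\binom{\hat K-1}{t-1}$ shares of $f^{j,2}_n$ whose subset passes through $Index(j,k)$ — precisely the threshold of the scheme — so these shares are independent of $f^{j,2}_n$; since $f^{j,1}_n$ is cached by no user and the secret‑sharing randomness is drawn independently over the pairs $(n,j)$ and independently of $\bm W$, conditioning on $\bm W$ does not change the joint law of all cached shares, hence $I(\bm W;Z_k)=0$ and in particular $I(\bm W_{-d_k};Z_k)=0$. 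For the delivery: each signal $k$ receives from $\Gamma_j$ is either $f^{j,1}_{d_k}$ (a function of $W_{d_k}$) or an $X^{\mathcal S}_{j,\bm d}$ with $Index(j,k)\in\mathcal S$, and by the very cancellation used for decodability the latter equals $S^j_{d_k,\mathcal S\setminus\{Index(j,k)\}}$ XORed with shares that already lie in $Z_k$; therefore $\{Y_{j,\bm d,k}\}$ is a deterministic function of $Z_k$, $\bm d$, and the shares of $W_{d_k}$ alone. Combining this with $I(\bm W_{-d_k};Z_k)=0$ and the mutual independence of the files and of the per‑file secret‑sharing randomness — so that conditioning on $W_{d_k}$ together with its randomness leaves $Z_k$ uncorrelated with $\bm W_{-d_k}$ — yields $I(\bm W_{-d_k};\{Y_{j,\bm d,k}\},Z_k)=0<\delta$. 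I expect this step to be the main obstacle: the non‑trivial part is the bookkeeping that keeps the requested file's secret‑sharing randomness separate and verifies that no forwarded XOR ever delivers a ``fresh'' share of a file $k$ did not ask for.

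It remains to read off the rates and invoke memory sharing. Relay $\Gamma_j$ originates the $\binom{\hat K}{t+1}$ sub‑signals $X^{\mathcal S}_{j,\bm d}$, each of length $F_s$, and forwards $\binom{\hat K-1}{t}$ of them plus $\frac{NF}{D}$ cached bits to each connected user; using $\binom{\hat K}{t+1}\big/\binom{\hat K-1}{t-1}=\frac{\hat K(\hat K-t)}{t(t+1)}$ and $\binom{\hat K-1}{t}\big/\binom{\hat K-1}{t-1}=\frac{\hat K-t}{t}$ gives $R_1^cF=\frac{\hat K}{r(t+1)}\bigl(1-\frac{Nr}{D}\bigr)F$ and $R_2^cF=\bigl(\frac1r-\frac ND\bigr)F+\frac{NF}{D}=\frac Fr$, matching the display preceding the statement. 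Solving (\ref{memacc2}) for $t$ gives $t=\frac{\hat KM}{D+M-rN}$, and substituting this into the bound on $R_1^c$ produces the closed form in (\ref{securecacherate1}). Finally, for an $M$ not of the special form $\frac{tD}{\hat K-t}\bigl(1-\frac{Nr}{D}\bigr)$, the corner points just obtained are combined by memory sharing as in \cite{maddah2014fundamental}, yielding the achievability of their lower convex envelope and completing the proof.
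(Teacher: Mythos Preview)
Your proposal is correct and follows essentially the same construction and rate calculations as the paper (Section~\ref{sec:achconf}): MDS coding, $\bigl(\binom{\hat K-1}{t-1},\binom{\hat K}{t}\bigr)$ secret sharing, the same XOR delivery and identical binomial manipulations. Your secrecy argument is in fact more carefully spelled out than the paper's brief remark, but the underlying reasoning --- cached shares are below threshold, and every forwarded XOR adds only a fresh share of the requested file --- is the same.
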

%Using memory sharing techniques, explained in \cite{maddah2014fundamental}, we can achieve the convex envelope of the points given by the values $M=\frac{tN}{\hat{K}-t}$, and $t\in\{0,1,..,\hat{K}-\!1\}$. This concludes the proof of Theorem \ref{securecach}.
\begin{remark}
Secret sharing encoding guarantees that no user is able to reconstruct any file from its cache contents only, as the cached shares are not sufficient to reveal any information about any file. In addition, the only new information in the received signals by any end user is the shares related to its requested file. Thus, (\ref{secrtiveconst}) is satisfied.
\end{remark}
For the special case of no relay caches, we obtain the following corollary.
\begin{corollary}\label{securecachno}
The normalized rates with secure caching, for $N=0$, $M=\frac{tD}{\hat{K}-t}$, and $t\!\in\!\{0,1,..,\hat{K}\!-\!1\}$, are upper bounded by
\begin{align}
R_1^c\leq \frac{\hat{K}(D\!+\!M)}{r\left(\!(\hat{K}\!+\!1)M\!+\!D\!\right)}, \qquad R_2^c\leq\frac{1}{r}.
\end{align}
The convex envelope of these points is achievable by memory sharing.  
\end{corollary}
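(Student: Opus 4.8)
The plan is to obtain Corollary~\ref{securecachno} as the direct specialization of Theorem~\ref{securecach} to $N=0$ (equivalently, the degenerate case with empty relay caches). First I would observe that when $N=0$ the split of each MDS-coded symbol $f_n^i$ into $f_n^{i,1}$ and $f_n^{i,2}$ collapses: $|f_n^{i,1}|=\frac{NF}{D}=0$, so $f_n^{i,2}=f_n^i$ carries the whole symbol of $F/r$ bits and the relay caches $V_j$ are empty. Hence the placement of Section~\ref{conf_cache} reduces to encoding each $f_n^i$ with a $\left({{\hat{K}-1}\choose{t-1}},{{\hat{K}}\choose{t}}\right)$ non-perfect secret sharing scheme and storing share $S_{n,\mathcal{T}}^j$ at user $k$ whenever $j\in\mathcal{N}(U_k)$ and $Index(j,k)\in\mathcal{T}$. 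The share size becomes $F_s=\frac{F/r}{{{\hat{K}-1}\choose{t}}}=\frac{t}{r(\hat{K}-t){{\hat{K}-1}\choose{t-1}}}F$, and the accumulated load at each user is $Dr{{\hat{K}-1}\choose{t-1}}F_s=\frac{tD}{\hat{K}-t}F=MF$, which is exactly the relation $M=\frac{tD}{\hat{K}-t}$ of the corollary (the $N=0$ instance of $t=\frac{\hat{K}M}{D+M-rN}$ becomes $t=\frac{\hat{K}M}{D+M}$).

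Next I would carry over the delivery phase of Theorem~\ref{securecach} verbatim with the degenerate pieces. For each $\mathcal{S}\subseteq[\hat{K}]$ with $|\mathcal{S}|=t+1$ the server sends the XOR $X_{j,\bm d}^{\mathcal{S}}$ of the appropriate missing shares to $\Gamma_j$; the relay forwards $X_{j,\bm d}^{\mathcal{S}}$ to every user $i$ with $Index(j,i)\in\mathcal{S}$ and, since $f_{d_i}^{j,1}$ is now empty, transmits nothing further. Decodability and the secrecy guarantee are inherited from Theorem~\ref{securecach}: user $i$ recovers the ${{\hat{K}}\choose{t}}$ shares of $f_{d_i}^j$ for every $j\in\mathcal{N}(U_i)$ and reconstructs $W_{d_i}$ via MDS decoding, while the secret-sharing property ensures the cached shares together with the received shares leak nothing about any $W_n$ with $n\neq d_i$, so (\ref{secrtiveconst}) holds. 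For the rates, each relay makes ${{\hat{K}}\choose{t+1}}$ transmissions of length $F_s$, so $R_1^cF={{\hat{K}}\choose{t+1}}F_s=\frac{\hat{K}}{r(t+1)}F$; substituting $t+1=\frac{(\hat{K}+1)M+D}{D+M}$ gives $R_1^c\leq\frac{\hat{K}(D+M)}{r((\hat{K}+1)M+D)}$. Over the second hop each relay forwards ${{\hat{K}-1}\choose{t}}$ shares and nothing from its (empty) cache, so $R_2^cF={{\hat{K}-1}\choose{t}}F_s=\frac{1}{r}F$, i.e. $R_2^c\leq\frac{1}{r}$.

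Finally I would invoke the same memory-sharing argument used throughout the paper (as in \cite{maddah2014fundamental}) to interpolate between the corner points indexed by $t\in\{0,\dots,\hat{K}-1\}$, yielding the convex envelope. There is essentially no obstacle here: the only things to check are that $N=0$ respects all parameter constraints of Theorem~\ref{securecach} (it does, since $0\leq 0\leq D/r$ and the range of $t$ is unchanged) and that the non-perfect secret sharing scheme of \cite{blakley1984security,secretsharing} still exists with the stated share size for large enough $F$ — which is guaranteed by the existence statement quoted before Section~\ref{conf_cache}. The mild subtlety, if any, is merely bookkeeping of the binomial identities ${{\hat K}\choose{t+1}}=\frac{\hat K-t}{t+1}\frac{\hat K}{t}{{\hat K-1}\choose{t-1}}$ and ${{\hat K-1}\choose{t}}=\frac{\hat K-t}{t}{{\hat K-1}\choose{t-1}}$ used to simplify $R_1^cF$ and $R_2^cF$.
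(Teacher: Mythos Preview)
Your proposal is correct and follows exactly the paper's approach: the corollary is obtained as the direct specialization of Theorem~\ref{securecach} to $N=0$, and the paper itself states this without further proof. Your additional bookkeeping (the collapse of $f_n^{i,1}$, the verification of the memory constraint, and the binomial simplifications yielding $R_1^c=\frac{\hat{K}}{r(t+1)}$ and $R_2^c=\frac{1}{r}$) is accurate and merely makes explicit what the paper leaves implicit.
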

%\subsection{For $N> 0$} 
\begin{remark}
From the results in \cite{ravindrakumar2016fundamental} and \cite{zewail2016fundamental}, it is known that, under secure caching requirement, the number bits received by any user is lower bounded by the file size. Thus, $R_2^c$ in (\ref{securecacherate1}) is optimal. In addition, to achieve zero rate over the first hop, we need $N\geq \frac{D}{r}$, where we distribute the library over the relays caches and we do not need to utilize the caches at the end users due to the unicast nature of the network links. 
\end{remark}

%\vspace{-.07 in}
\section{Combination networks with secure caching and secure delivery}\label{sec:achconf_sec}
Now, we investigate the network under the requirements studied in Sections \ref{sec:achdelivery} and \ref{sec:achconf}, simultaneously. The achievability scheme utilizes both one-time pads and secret sharing.
%\vspace{-.2 in}
\subsection{Cache Placement Phase}
For $M\!\!= 1+\frac{tD}{\hat{K}-t}(1-\frac{rN}{D+\hat K})$, and $t\in\{0,1,..,\hat{K}\!-\!1\}$, after encoding each file using an $(h,r)$ MDS code, we divide each encoded symbol into two parts, $f_n^{i,1}$ with size $\frac{NF}{D+\hat K}$ bits and $f_n^{i,2}$ with size $\frac{F}{r}-\frac{NF}{D+\hat K}$ bits. Only $\Gamma_i$ caches the parts $\{f_n^{i,1}: \ \forall n\}$.

 Each of the symbols $f_n^{i,2}$ is encoded using a $\left({{\hat{K}-1}\choose {t-1}},\!{{\hat{K}}\choose {t}}\!\right)$ secret sharing scheme from \cite{blakley1984security,secretsharing}. The resulting shares are denoted by $S_{n,\mathcal{T}}^j$, where $n$ is the file index i.e., $n\in\{1,..,N\}$, $j$ is the index of the encoded symbol, i.e., $j=1,..,h$, and $\mathcal{T}\subseteq [\hat{K}], |\mathcal{T}|=t$. Each share has size
\begin{equation}
F_s=\frac{\frac{F}{r}-\frac{NF}{D+\hat{K}}}{{{\hat{K}}\choose {t}}-{{\hat{K}-1}\choose {t-1}}}=\frac{t\left(1-\frac{Nr}{D+\hat{K}}\right)}{r(\hat{K}-t){{\hat{K}-1}\choose {t-1}}}F \mbox{ bits}.
\end{equation} 
The server allocates the shares $S_{n,\mathcal{T}}^j$, $\forall n$ in the cache of user $k$ whenever $j \in \mathcal{N}(U_k)$ and $Index(j,k) \in \mathcal{T}$.

 Furthermore, the server generates $h{{\hat K}\choose{t+1}}$ independent keys. Each key is uniformly distributed with length $F_s$ bits. We denote each key by $K_{\mathcal{T}_K}^u$, where $u\!=\!1,..,h$, and $\mathcal{T}_K\subseteq [\hat{K}], |\mathcal{T}_K|=t+1$. User $i$ stores the keys $K_{\mathcal{T}_K}^u$, $\forall u\! \in\! \mathcal{N}(U_i)$, whenever $Index(u,i)\in \mathcal{T}_K$. Also, the sever generates the random keys $K_j^i$ each of length $\frac{NF}{D+\hat K}$ bits, for $i=1,..,h$ and $j=1,..,\hat{K}$, which will be cached by relay $i$ and user $k$ with $Index(i,k)=j$.

 Therefore, at the end of cache placement phase, the contents of the cache memory at relay $j$ and user $k$ are given by 
 \begin{equation}
 V_j\!=\!\left\{f_{n}^{j,1}, K_u^j:  \ \forall n,u \right\},
 \end{equation}
 %\vspace{-.5 in}
\begin{align}
Z_k=&\left\{S_{n,\mathcal{T}}^j,K_{\mathcal{T}_K}^j\!, K_l^j: \forall n, \forall j\!\in\! \mathcal{N}(U_k), Index(j,k)\!\in\! \mathcal{T}, \mathcal{T}_K, Index(j,k)=l \right\}.%\nonumber \\
%& \qquad \qquad \qquad \qquad \qquad \qquad \qquad \qquad  \cup \{: k\!\in\! \mathcal{N}(\Gamma_j), \ \}.
\end{align}

\begin{remark}
In addition to the keys, each user stores $Dr{{\hat K-1}\choose{t-1}}$ shares, thus the accumulated number of bits stored in each cache memory is 
\begin{align}\label{memacc3}
Dr&{{\hat K-1}\choose{t-1}}\frac{t\left(1-\frac{Nr}{D+\hat{K}}\right)}{r(\hat{K}-t){{\hat{K}-1}\choose {t-1}}}F+r{{\hat K-1}\choose{t}}\frac{t\left(1-\frac{Nr}{D+\hat{K}}\right)}{r(\hat{K}-t){{\hat{K}-1}\choose {t-1}}}F+r\frac{NF}{D+\hat K}\nonumber \\
&=\frac{Dt\left(1-\frac{Nr}{D+\hat{K}}\right)}{r(\hat{K}-t)}F+\left(1-\frac{Nr}{D+\hat{K}}\right)F+r\frac{NF}{D+\hat K}=MF. 
\end{align}
Thus, the scheme satisfies the memory constraints, and we get $ t=\frac{\hat{K}(M-1)(D+\hat{K})}{(D+\hat K)(M+D-1)+rND}$. 
%%%\small
%\begin{equation}
% t=\frac{\hat{K}(M-1)(D+\hat{K})}{(D+\hat K)(M+D-1)+rND}.
%\end{equation}
%%%\normalsize
\end{remark}
%\vspace{-0.25 in}
%\vspace{-.3 in}
\subsection{Coded Delivery Phase}
The delivery phase begins with announcing the demand vector to all network nodes. For $\Gamma_j$, at each transmission instance, we consider a $\mathcal{S}\subseteq[\hat{K}]$, where $|\mathcal{S}|=t+1$. For each $\mathcal{S}$, the server transmits to $\Gamma_j$, the following signal  
\begin{equation}\label{rece_conf3}
X_{j,\bm d}^{\mathcal{S}}=K^j_{\mathcal{S}}\bigoplus_{\{i:i \in \mathcal{N}(\Gamma_j),\ Index(j,i)\in \mathcal{S}\}} S^{j}_{d_i,\mathcal{S} \setminus\{Index(j,i)\}},
\end{equation}
i.e., the server transmits to $\Gamma_j$, the signal $X_{j,\bm d}=\bigcup_{\mathcal{S}\subseteq[\hat K]:|\mathcal{S}|=t+1} \{X_{j,\bm d}^{\mathcal{S}}\}$. 
%\begin{equation}
%X_{j,\bm d}=\bigcup_{\mathcal{S}\subseteq\{1,..,\hat K\}:|\mathcal{S}|=t+1} \{X_{j,\bm d}^{\mathcal{S}}\}.
%\end{equation}  
Then, $\Gamma_j$ forwards the signal $X_{j,\bm d}^{\mathcal{S}}$ to user $i$ whenever $Index(j,i)\in\mathcal{S}$. In addition, $\Gamma_j$ sends $f_{d_i}^{j,1}$ encrypted by $K_u^j$ to user $i$ such that $Index(j,i)=k$. After decrypting the received signals, user $i$ get $f_{d_i}^{j,1}$ and can extract the set of shares $\{S^{j}_{d_i,\mathcal{T}}: \mathcal{T}\subseteq [\hat{K}]\setminus \{Index(j,i)\}, |\mathcal{T}|=t \}$ from the signals received from $\Gamma_j$. These shares in addition to the ones in its cache, i.e., $S^{j}_{d_i,\mathcal{T}}$ with $Index(j,i)\in\mathcal{T}$, allow user $i$ to decode $f^{j,2}_{d_i}$ from its ${{\hat{K}}\choose{t}}$ shares. Since, user $i$ receives signals from $r$ different relay nodes, it obtains $\{f^{j}_{d_i}$, $\forall j \in \mathcal{N}(U_i)\}$, then decodes $W_{d_i}$.
%\vspace{-.1 in}
 \subsection{Secure Caching and Secure Delivery Rates}
 We refer to the first and second hop rates as $R_1^{sc}$ and $R_2^{sc}$, respectively. Each relay node sends ${{\hat{K}}\choose {t+1}}$ signals, each of length $F_s$, thus we have
\begin{align}
R_1^{sc}F&={{\hat{K}}\choose {t+1}}\frac{t\left(1-\frac{Nr}{D+\hat{K}}\right)}{r(\hat{K}-t){{\hat{K}-1}\choose {t-1}}}F=\frac{\hat{K}}{r(t+1)}\left(1-\frac{Nr}{D+\hat{K}}\right)F\nonumber \\ 
&=\frac{\hat{K}\left(rND+(D+\hat{K})(M+D-1)\right)}{r\left(rND+(D+\hat{K})[D+(M-1)(\hat{K}+1)]\right)}\left(1-\frac{Nr}{D+\hat{K}}\right)F.
\end{align} 
In the second hop, each relay node is responsible for forwarding ${{\hat{K}-1}\choose {t}}$ from its received signals to each of its connected end users, in addition, it transmits $\frac{NF}{D+\hat K}$ bits from its cache, thus 
\begin{align}
R_2^{sc}F={{\hat{K}-1}\choose {t}}\frac{t\left(1-\frac{Nr}{D+\hat{K}}\right)}{r(\hat{K}-t){{\hat{K}-1}\choose {t-1}}}F+\frac{NF}{D+\hat K}=\frac{1}{r}F.
\end{align} 
Therefore, we can obtain the following theorem.
\begin{theorem}
Under secure delivery and secure caching requirements, for $0\leq N\leq \frac{D+\hat{K}}{r}$, $M\!\!= 1+\frac{tD}{\hat{K}-t}(1-\frac{rN}{D+\hat K})$, and $t\in\{0,1,..,\hat{K}\!\!-\!\!1\}$, the transmission rates are upper bounded by
\begin{align}
R_1^{sc}\leq \frac{\hat{K}\left(rND+(D+\hat{K})(M+D-1)\right)}{r\left(rND+(D+\hat{K})[D+(M-1)(\hat{K}+1)]\right)}\left(1-\frac{Nr}{D+\hat{K}}\right), \qquad R_2^{sc}\leq\frac{1}{r}.
\end{align}
In addition, the convex envelope of these points is achievable by memory sharing.  
\end{theorem}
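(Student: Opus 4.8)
The plan is to show that the scheme just described meets every cache-memory budget, decodes correctly, and satisfies both (\ref{secureconst}) and (\ref{secrtiveconst}), and then to read off $R_1^{sc}$ and $R_2^{sc}$ by counting transmissions; the convex envelope then follows from memory sharing \cite{maddah2014fundamental}, exactly as in the previous sections.

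First I would check feasibility of the placement. The relay side is immediate: $\Gamma_j$ stores $\{f_n^{j,1}:\forall n\}$ ($\tfrac{DNF}{D+\hat K}$ bits) together with $\hat K$ keys $K_u^j$ of $\tfrac{NF}{D+\hat K}$ bits each, summing to $NF$. For a user I would collect the $Dr\binom{\hat K-1}{t-1}$ data shares, the $r\binom{\hat K-1}{t}$ delivery keys, and the $r$ keys $K_l^j$ with the stated sizes; this is the computation already displayed in (\ref{memacc3}), whose value $MF$ pins down $t=\tfrac{\hat K(M-1)(D+\hat K)}{(D+\hat K)(M+D-1)+rND}$, and the range $0\le N\le\tfrac{D+\hat K}{r}$ is exactly what keeps $|f_n^{i,2}|\ge 0$. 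Next, for decoding, I would note that user $i$ holds $K_{\mathcal T_K}^j$ whenever $Index(j,i)\in\mathcal T_K$ and $K_l^j$ with $l=Index(j,i)$, hence can one-time-pad-decrypt every signal it is forwarded by $\Gamma_j$; cancelling from each decrypted $X_{j,\bm d}^{\mathcal S}$ the shares $S^{j}_{d_l,\mathcal S\setminus\{Index(j,l)\}}$ it already caches (which it does, since $Index(j,i)\in\mathcal S\setminus\{Index(j,l)\}$ for $l\ne i$) leaves $S^{j}_{d_i,\mathcal S\setminus\{Index(j,i)\}}$, so it accumulates all $\binom{\hat K}{t}$ shares of $f_{d_i}^{j,2}$ and, by the reconstruction property of the $\big(\binom{\hat K-1}{t-1},\binom{\hat K}{t}\big)$ scheme, recovers $f_{d_i}^{j,2}$; with $f_{d_i}^{j,1}$ received directly this yields $f_{d_i}^{j}$ for every $j\in\mathcal N(U_i)$, and the $(h,r)$ MDS property then gives $W_{d_i}$, so (\ref{reliabeconst}) holds.

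For the two secrecy requirements I would argue as follows. Secure delivery is the one-time-pad argument of Remark \ref{secremark}: each of the $h\binom{\hat K}{t+1}$ server signals is a XOR of shares masked by an independent uniform $K^j_{\mathcal S}$ of the matching length $F_s$, and each relay output is either such a forwarded signal or $f_{d_i}^{j,1}$ masked by an independent uniform $K^j_{Index(j,i)}$ of length $\tfrac{NF}{D+\hat K}$; since all these keys are mutually independent, independent of $(W_1,\dots,W_D)$, and each used once, the joint ensemble $(\mathcal X,\mathcal Y)$ is independent of the files, which is (\ref{secureconst}). Secure caching is where the real work is: I would fix $n\ne d_k$ and a relay $j\in\mathcal N(U_k)$ and observe that user $k$ caches no bit of $f_n^{j,1}$ and exactly the $\binom{\hat K-1}{t-1}$ shares $\{S^{j}_{n,\mathcal T}:Index(j,k)\in\mathcal T\}$ of $f_n^{j,2}$; moreover, in every decrypted delivery signal $X_{j,\bm d}^{\mathcal S}$ that user $k$ can read, the term carrying file $n$ is $S^{j}_{n,\mathcal S\setminus\{Index(j,l)\}}$ for the unique $l$ with $d_l=n$, and since $Index(j,k)\in\mathcal S\setminus\{Index(j,l)\}$ this share is already in its cache — so delivery adds no new share of $f_n^{j,2}$, only shares of its own symbol $f_{d_k}^{j,2}$. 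Hence user $k$'s entire file-$n$-dependent view reduces to those $\binom{\hat K-1}{t-1}$ cached shares per relay (the keys being file-independent), which is exactly the secret-sharing threshold, and because the sharing randomness is fresh and independent across encoded symbols and across files this view is independent of $W_n$; summing over $n\ne d_k$ gives (\ref{secrtiveconst}). I expect this bookkeeping — separating, for each non-requested file, the shares already cached from those learned in the delivery phase and confirming the total never exceeds the threshold, then lifting per-symbol secret-sharing secrecy to independence from the whole file via the product structure of the randomness — to be the main obstacle; the one-time-pad keys must be shown irrelevant here precisely because they carry no information about the files.

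Finally I would count rates: the server sends $\binom{\hat K}{t+1}$ signals of length $F_s$ to each relay, and each relay forwards $\binom{\hat K-1}{t}$ of them plus $\tfrac{NF}{D+\hat K}$ bits of cached $f_{d_i}^{j,1}$; using $\binom{\hat K}{t+1}/\binom{\hat K-1}{t-1}=\tfrac{\hat K(\hat K-t)}{t(t+1)}$ and $\binom{\hat K-1}{t}/\binom{\hat K-1}{t-1}=\tfrac{\hat K-t}{t}$ yields $R_1^{sc}F=\tfrac{\hat K}{r(t+1)}\big(1-\tfrac{Nr}{D+\hat K}\big)F$ and $R_2^{sc}F=\tfrac1rF$, and substituting $t=\tfrac{\hat K(M-1)(D+\hat K)}{(D+\hat K)(M+D-1)+rND}$ turns the first into the stated bound. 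Memory sharing over the feasible points $t\in\{0,\dots,\hat K-1\}$ then gives the convex envelope, completing the argument. All the displayed algebra (the memory count, the binomial simplifications, the substitution for $t$) is routine; the only genuinely delicate point is the secure-caching verification above.
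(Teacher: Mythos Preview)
Your proposal is correct and follows essentially the same route as the paper: verify the relay and user memory budgets (the paper does the latter in (\ref{memacc3})), check decodability via the MDS and secret-sharing reconstruction properties, invoke the one-time-pad argument of Remark \ref{secremark} for (\ref{secureconst}), use the threshold property of the secret-sharing scheme for (\ref{secrtiveconst}), and then count signals to obtain $R_1^{sc}=\tfrac{\hat K}{r(t+1)}\big(1-\tfrac{Nr}{D+\hat K}\big)$ and $R_2^{sc}=\tfrac1r$ before substituting for $t$. The only minor slip is the phrase ``the unique $l$ with $d_l=n$'' in your secure-caching step---several users connected to $\Gamma_j$ may request the same file $n$---but your argument goes through unchanged for every such $l$, since $k\neq l$ forces $Index(j,k)\in\mathcal S\setminus\{Index(j,l)\}$ in each case.
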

For the case where there is no caches at the relays, we have
\begin{corollary}
Under secure delivery and secure caching requirements, for $N=0$, $M\!\!=\!\!\frac{tD}{\hat{K}-t}\!+\!1$,  and $t\in\{0,1,..,\hat{K}\!\!-\!\!1\}$, the transmission rates are upper bounded by
\begin{align}
R_1^{sc}\leq\frac{\hat{K}(D+M-1)}{r\left((\hat{K}+1)(M-1)+D\right)}, \qquad R_2^{sc}\leq\frac{1}{r}.
\end{align}
In addition, the convex envelope of these points is achievable by memory sharing.  
\end{corollary}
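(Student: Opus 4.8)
The plan is to derive the corollary as the $N=0$ specialization of the preceding theorem, so that all that remains is bookkeeping and an algebraic simplification. First I would substitute $N=0$ into the memory relation $M = 1 + \frac{tD}{\hat{K}-t}\bigl(1-\frac{rN}{D+\hat{K}}\bigr)$; the parenthetical factor collapses to $1$ and we obtain $M = 1 + \frac{tD}{\hat{K}-t}$, which is exactly the parametrization in the statement, and the admissible range $t\in\{0,1,\dots,\hat{K}-1\}$ is unchanged since the theorem imposes no coupling between $t$ and $N$.

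Next I would put $N=0$ into the two rate bounds. In the expression for $R_1^{sc}$ every term $rND$ vanishes and the factor $\bigl(1-\frac{Nr}{D+\hat{K}}\bigr)$ becomes $1$, leaving $\frac{\hat{K}(D+\hat{K})(M+D-1)}{r(D+\hat{K})\left[D+(M-1)(\hat{K}+1)\right]}$; cancelling the common factor $(D+\hat{K})$ and rewriting $M+D-1=D+M-1$ and $D+(M-1)(\hat{K}+1)=(\hat{K}+1)(M-1)+D$ gives $R_1^{sc}\le \frac{\hat{K}(D+M-1)}{r\left((\hat{K}+1)(M-1)+D\right)}$, as claimed. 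The bound $R_2^{sc}\le\frac{1}{r}$ does not involve $N$ and carries over verbatim, and achievability of the convex envelope of the resulting memory--rate points follows by the same memory-sharing argument used for the theorem.

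The only point that genuinely needs a second look is that the achievability construction degenerates gracefully at $N=0$: the part $f_n^{i,1}$ then has size $\frac{NF}{D+\hat{K}}=0$ and the keys $K_j^i$ have length $\frac{NF}{D+\hat{K}}=0$, so the relays cache nothing and each encoded symbol $f_n^{i}$ is protected solely through the $\bigl(\binom{\hat{K}-1}{t-1},\binom{\hat{K}}{t}\bigr)$ secret-sharing layer, while the multicast signals are still one-time padded by the keys $K^j_{\mathcal{S}}$; the delivery step of this section thus reduces to secret-sharing-based coded multicasting with encryption, and reliability, secure caching, and secure delivery hold exactly as before. Since no new argument is required, I do not expect a real obstacle here; the only risk is a transcription slip in the simplification, and a convenient check is that the resulting $R_1^{sc}(M)$ equals the secure-caching-only bound of Corollary~\ref{securecachno} evaluated at $M-1$, reflecting that exactly one file's worth of each cache is now spent on keys.
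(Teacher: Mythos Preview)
Your proposal is correct and follows exactly the approach the paper takes: the corollary is stated immediately after the theorem as its $N=0$ specialization, with no additional argument. Your algebraic verification and the observation that the construction degenerates gracefully (zero-size $f_n^{i,1}$ and keys $K_j^i$) are more explicit than anything in the paper, and the sanity check against Corollary~\ref{securecachno} at $M-1$ is a nice touch the paper does not mention.
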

\vspace{-.1 in}
\section{Numerical Results and Discussion}\label{sec:discuss}
In this section, we discuss the insights gained from our study and demonstrate the performance of our proposed techniques. We focus on the achievable rates over the links of each hop of communication.%, similar to \cite{tang2016coded}. This allows us to take into account the total network load. As another metric, references \cite{ji2015fundamental,ji2015caching,wan2017combination,wan2017novel} focus on the maximum rate of the two hops.
\vspace{-.1 in}   
\subsection{Achievable Rates over the First Hop}
\begin{figure}[t]
\includegraphics[width=0.70\textwidth ,height=2.9 in]{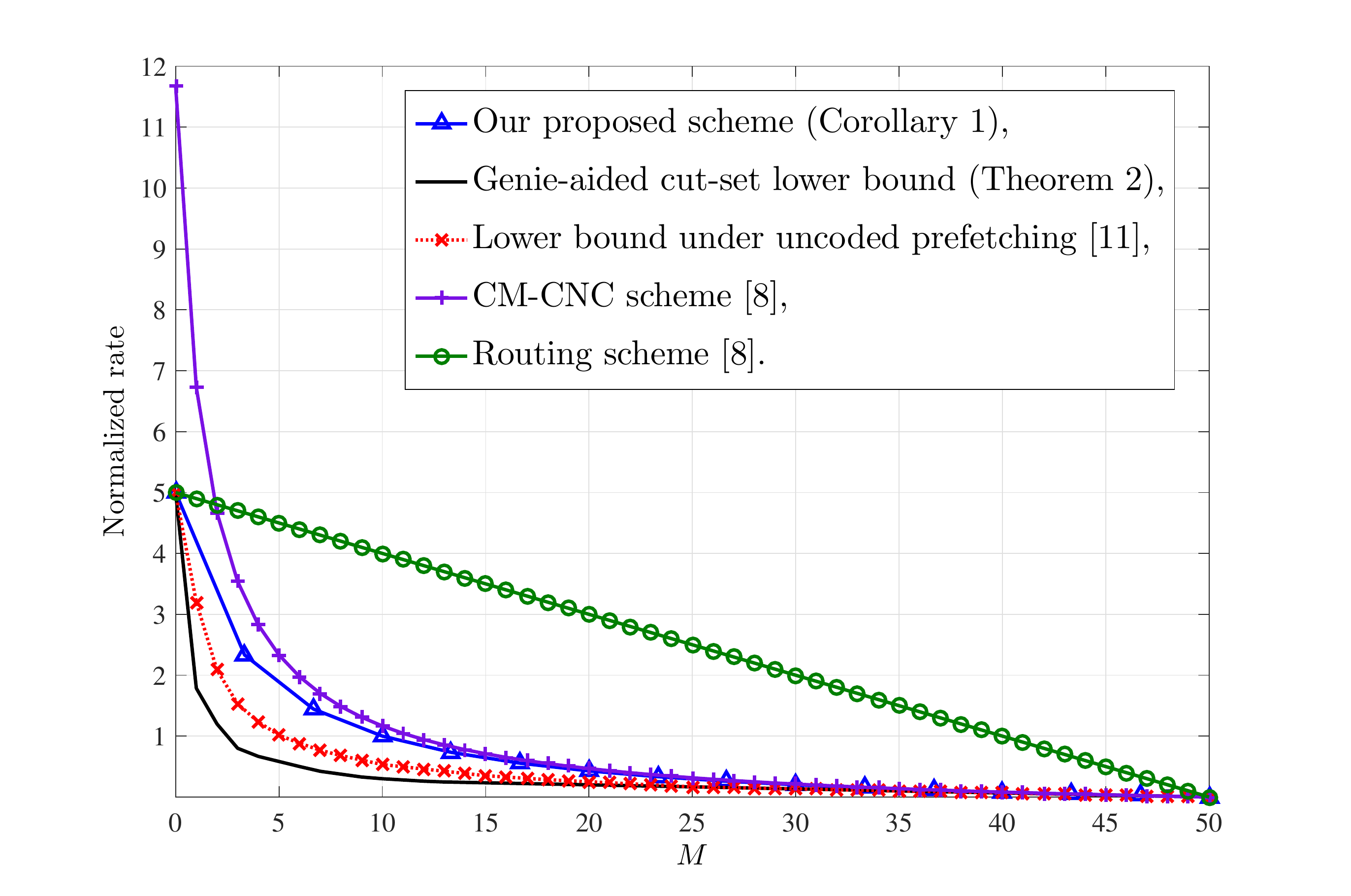}
\centering
\caption{\small Lower and upper bounds for $K=35$, $N=0$, $D\!=\!50$, $h\!=\!7$ and $r\!=\!3$ .}\label{comp_h7r3}
\vspace{-.2 in}
\end{figure}

Fig. \ref{comp_h7r3} shows the comparison between the achievable normalized rate of our proposed scheme in Corollary 1 (the special case with no caching at the relays), lower bound in Theorem \ref{thm:lower}, lower bound under uncoded prefetching \cite{wan2017combination}, the coded multicasting and combination network coding (CM-CNC) scheme \cite{ji2015caching}, and the routing scheme from \cite{ji2015caching}. We can see that our proposed scheme outperforms the ones in \cite{ji2015caching}. We remark that in this special case of no caching relays, the lower bounds in subsection \ref{sec:lower_R1} reduce to the ones in \cite{ji2015caching}. Therefore, the same order optimality as in \cite[Theorem 4]{ji2015caching} applies.

In Fig. \ref{h7_r4_M1}, we plot the normalized rate for different relay cache sizes. It can be observed that the normalized rates are decreasing functions of the memory capacities and whenever $M+rN\geq 60$, $R_1=0$, while $R_2=0$ if $M\geq 60$. This shows how the cache memories at the relay nodes as well as the ones at the end users can completely replace the main server during the delivery phase. We note that the gap between lower and upper bound decreases as $M$ increases.
 \begin{figure}[t]
 \includegraphics[width=0.70\textwidth ,height=2.9 in]{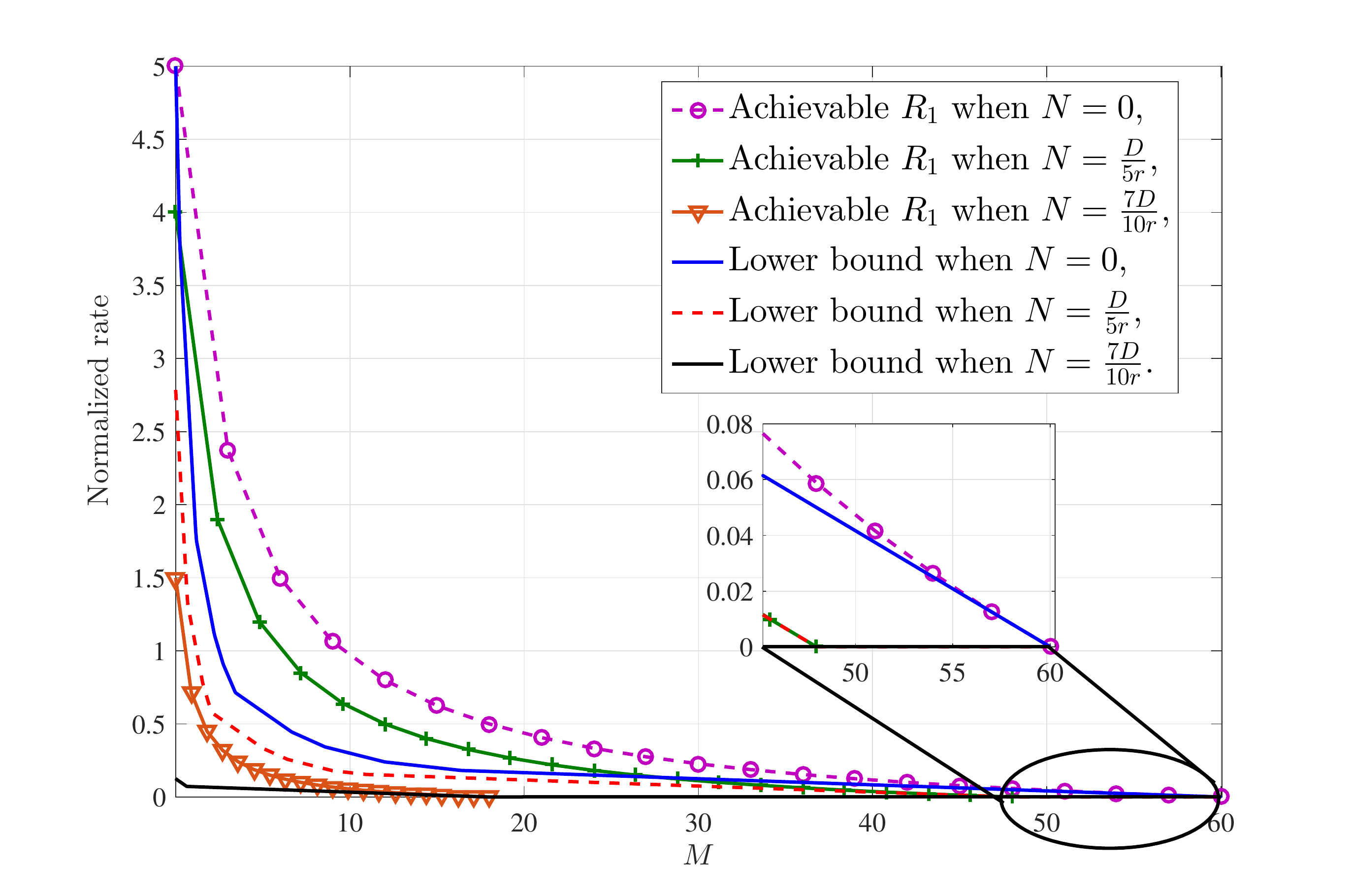}
\centering
\caption{\small Lower and upper bounds for $K=35$, $D\!=\!60$, $h\!=\!7$ and $r\!=\!4$.}\label{h7_r4_M1}
\vspace{-.1 in}
\end{figure}

%Note that in the case where $N=0$, no caches at the relays, the lower bounds in subsection \ref{sec:lower_R1}, reduces to the ones in \cite{ji2015caching}, and we can get the following corollary.  
%\begin{corollary}\label{corr1_order}
%For $N=0$, $M\geq 0.5$ and $r\geq \frac{D}{2M}$, for $h,D\rightarrow \infty$, the multiplicative gap between the achievable $R_1$ in Corollary 1 and the lower bound in Theorem \ref{thm:lower} is bounded by 12.
%\end{corollary}
%\begin{proof}
%The proof follows from \cite[Theorem 4]{ji2015caching} since the rates in Corollary 1 are strictly less than the rates in \cite{ji2015caching}, see Remark \ref{remark_resolvable1}.
%\end{proof}
\vspace{-.2 in}
\subsection{Optimality over the Second Hop}
From the cut set bound in Theorem \ref{thm:lower}, we see that our achievable rate over the second hop is optimal. Thus, the total delivery load per relay is minimized.

For the network with no caches at the relays, it has been shown in \cite{wan2017combination} that the proposed schemes achieve lower rates over the first hop compared with the scheme in \cite{tang2016coded}, i.e., achieves lower $R_1$ than our scheme in Section \ref{sec:nonsec}, for the case where $M=N/K$. Additionally, it has been shown that the schemes in \cite{wan2017combination} achieve the optimal rate under uncoded prefetching for $r=h-1$. Note that the scheme based on interference alignment in \cite{wan2017combination} for the case where $M=D/K$ achieves lower rates over the first hop, however the achievable rate during the second hop is not optimal. As an example consider the network with $D=K=6$, $h=4$, $r=2$ and $M=1$, the normalized optimal delivery load during the second hop is $\frac{5}{12}$ and it is achievable by our scheme. The scheme in \cite[Section IV-B]{wan2017combination} achieves normalized delivery load of $\frac{7}{12}$. On the other hand, in this example, the scheme in \cite[Section IV-B]{wan2017combination} achieves $R_1=\frac{2}{3}$, while our scheme achieves $1$. Therefore, the total normalized network load, i.e., $hR_1\!+\!rKR_2$, under the scheme in \cite[Section IV-B]{wan2017combination} is $\frac{29}{3}$, while our scheme achieves $9$. This example demonstrates to the importance of ensuring the optimality over the second hop in order to reduce the overall network load.   
\vspace{-.19 in}
\subsection{Performance with Secrecy Requirements}

In Figs. \ref{fig_diff_requirementsR1} and \ref{fig_diff_requirementsR2}, we compare the achievable rates under different secrecy scenarios. From these figures, we observe that the cost of imposing secure delivery is \textit{negligible} for realistic system parameters. The gap between the achievable rates of the system without secrecy and the system with secure delivery vanishes as $M$ increases. Same observation holds for the gap between the rates with secure caching and those with secure caching and secure delivery.  

In addition, achievable rates over the second hop is optimal, i.e., achieves the cut set bound. In particular, under secure delivery, each user caches a fraction $\frac{M-1}{N-1}$ of each file, and the total data received by any end user under secure delivery equals $(1\!-\!\frac{M-1}{N-1})F$, which is the minimum number of bits required to reconstruct the requested file. Similarly, in the two remaining scenarios, we know from the result in reference \cite{ravindrakumar2016fundamental} that the minimum number of bits required by each user to be able to recover its requested file is $F$, and our achievable schemes achieve this lower bound.  
\begin{figure}[t]
\includegraphics[width=0.7\textwidth ,height=2.8 in]{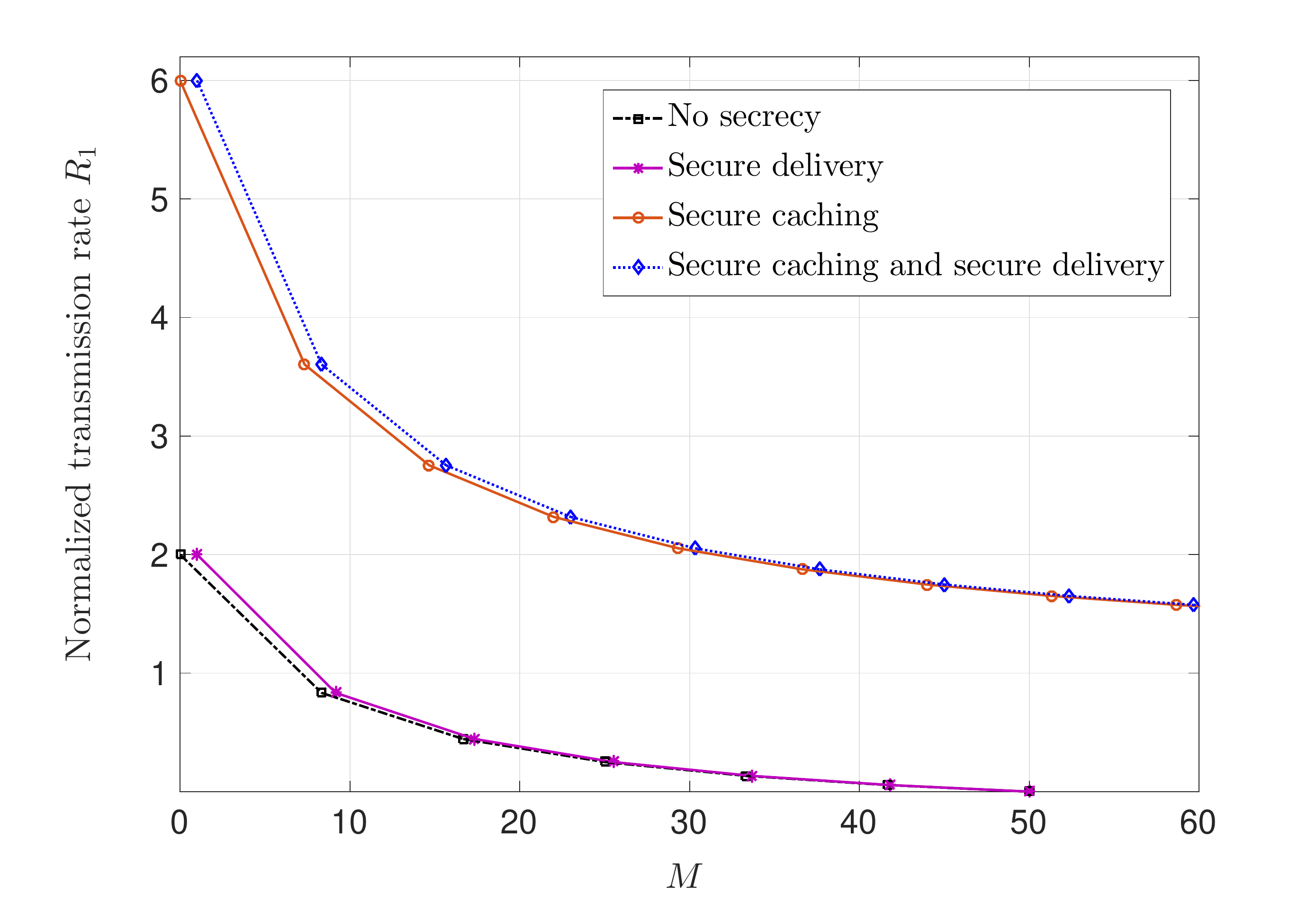}
\centering
\caption{\small Rates over the first hop under different system requirements for $N=0$, $D\!=\!50$, $K\!=\!15$, $h\!=\!5$ and $r\!=\!3$.}\label{fig_diff_requirementsR1}
\vspace{-.2 in}
\end{figure}
 \begin{figure}
 %\vspace{-.2 in}
\includegraphics[width=0.7\textwidth ,height=2.8 in]{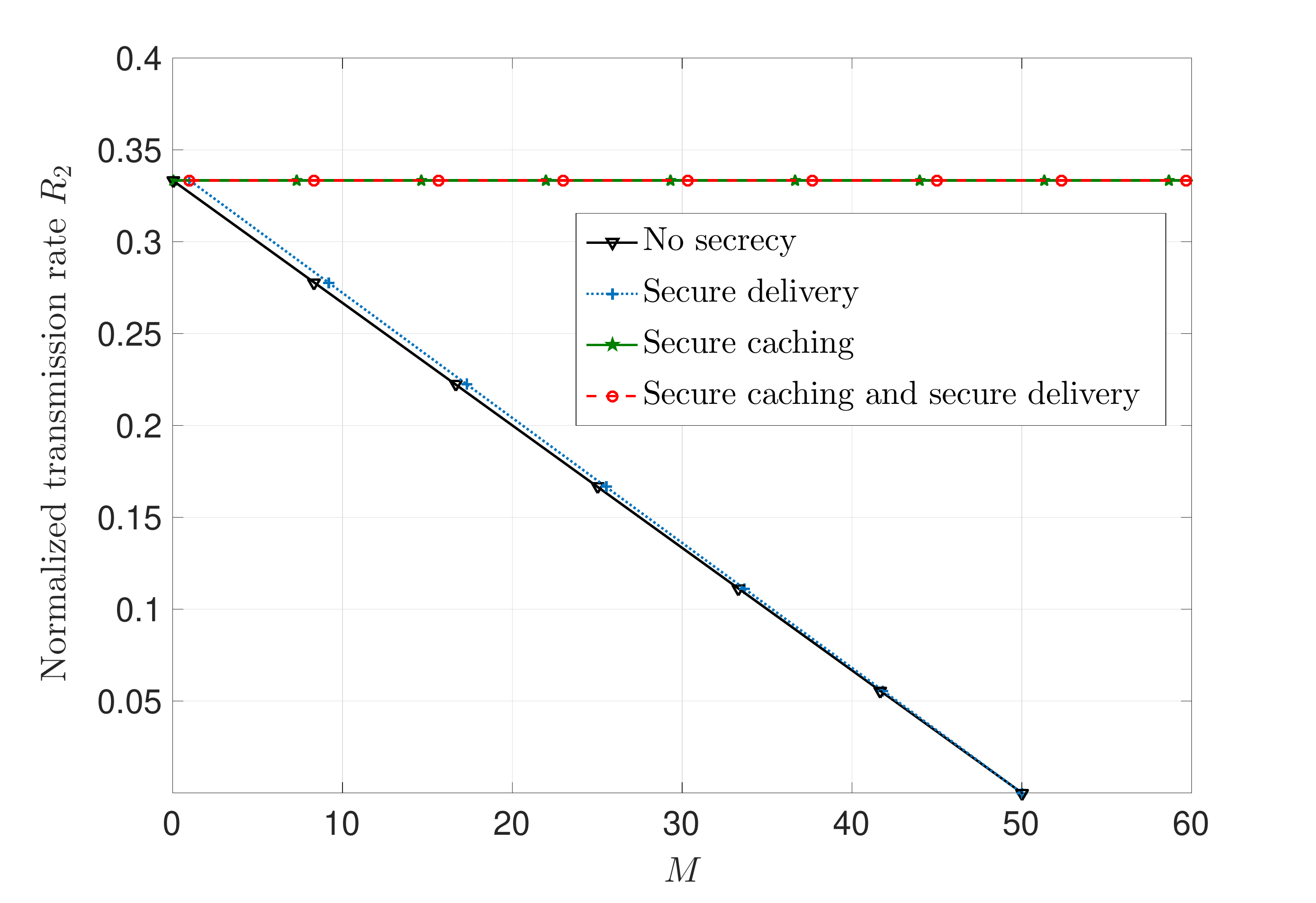}
\centering
\caption{\small Rates over the second hop under different system requirements for $N=0$, $D\!=\!50$, $K\!=\!15$, $h\!=\!5$ and $r\!=\!3$.}\label{fig_diff_requirementsR2}
%\vspace{-.2 in}
\end{figure}
Another observation is that under secure caching requirement only (Section \ref{sec:achconf}), we do not need to use keys in order to ensure the secure caching requirement, in contrast with the general schemes in references \cite{ravindrakumar2016fundamental} and \cite{zewail2016fundamental}. This follows from the network structure, as the relay nodes unicast the signals to each of the end users. In particular, the received signals by user $k$ are formed by combinations of the shares in its memory and "fresh" shares of the requested file. Thus, at the end of communications, it has $r{{\hat{K}}\choose {t}}$ shares of the file $W_{d_k}$, and only $r{{\hat{K}-1}\choose {t-1}}$ shares of the remaining files, i.e., the secure caching requirement is satisfied, without the need to encrypt. In addition, for the case where $M=0$, i.e., no cache memory at the end users, secure caching is possible via routing, unlike the case in \cite{ravindrakumar2016fundamental}, where $M$ must be at least $1$. 

\vspace{-.15 in}
\begin{remark}
Corollaries 2-4 generalize our previous results that were limited to resolvable networks \cite{zewail2016coded}, i.e., we show the achievability of the rates in \cite{zewail2016coded} for any combination network. %Finally, it is worth mentioning that in case of resolvable networks, simpler schemes can be found in \cite{zewail2016coded}, and achieve the same memory-rate trade-off.   
 \end{remark}

\vspace{-.4 in}
\section{Conclusion}\label{sec:con}
\vspace{-0.05 in}
In this work, we have investigated the fundamental limits two-hop cache-aided combination networks with caches at the relays and the end users, with and without security requirements. We have proposed a new coded caching scheme, by utilizing MDS coding and jointly optimizing the cache placement and delivery phases. We have shown that whenever the sum of the end user cache and the ones of its connected relays is sufficient to store the database, then there is no need for the server transmission over the first hop. We have developed genie-aided cut-set lower bounds on the rates and shown order optimality for the first hop and optimality for the second. 

We have next investigated combination networks with caching relays under secure delivery constraints, secure caching constraints, as well as both secure delivery and secure caching constraints. The achievability schemes, for each of these requirements, jointly optimize the cache placement and delivery phases, and utilize one-time padding and secret sharing. We have illustrated the impact of the network structure and relaying on the system performance after imposing different secrecy constraints.

 The decomposition philosophy using MDS codes we have utilized in this work allows adopting the ideas developed for the classical coded caching setup to cache-aided combination networks. Future directions in combination networks include caching with untrusted relays and considering the physical layer impairments in the delivery phase. 
\vspace{-.2 in}
 \bibliographystyle{IEEEtran}
\bibliography{IEEEabrv,cachingLib}

% Generated by IEEEtran.bst, version: 1.13 (2008/09/30)
\begin{thebibliography}{10}
\providecommand{\url}[1]{#1}
\csname url@samestyle\endcsname
\providecommand{\newblock}{\relax}
\providecommand{\bibinfo}[2]{#2}
\providecommand{\BIBentrySTDinterwordspacing}{\spaceskip=0pt\relax}
\providecommand{\BIBentryALTinterwordstretchfactor}{4}
\providecommand{\BIBentryALTinterwordspacing}{\spaceskip=\fontdimen2\font plus
\BIBentryALTinterwordstretchfactor\fontdimen3\font minus
  \fontdimen4\font\relax}
\providecommand{\BIBforeignlanguage}[2]{{%
\expandafter\ifx\csname l@#1\endcsname\relax
\typeout{** WARNING: IEEEtran.bst: No hyphenation pattern has been}%
\typeout{** loaded for the language `#1'. Using the pattern for}%
\typeout{** the default language instead.}%
\else
\language=\csname l@#1\endcsname
\fi
#2}}
\providecommand{\BIBdecl}{\relax}
\BIBdecl

\bibitem{almeroth1996use}
K.~C. Almeroth and M.~H. Ammar, ``The use of multicast delivery to provide a
  scalable and interactive video-on-demand service,'' \emph{IEEE Journal on
  Selected Areas in Communications}, vol.~14, no.~6, pp. 1110--1122, 1996.

\bibitem{korupolu2001placement}
M.~R. Korupolu, C.~G. Plaxton, and R.~Rajaraman, ``Placement algorithms for
  hierarchical cooperative caching,'' \emph{Journal of Algorithms}, vol.~38,
  no.~1, pp. 260--302, 2001.

\bibitem{maddah2014fundamental}
M.~A. Maddah-Ali and U.~Niesen, ``Fundamental limits of caching,'' \emph{{IEEE
  } Trans. Info. Theory}, vol.~60, no.~5, pp. 2856--2867, 2014.

\bibitem{ji2014fundamental}
M.~Ji, G.~Caire, and A.~Molisch, ``Fundamental limits of caching in wireless
  {D2D} networks,'' \emph{{IEEE } Trans. Info. Theory}, vol.~62, no.~2, pp.
  849--869, 2016.

\bibitem{karamchandani2014hierarchical}
N.~Karamchandani, U.~Niesen, M.~A. Maddah-Ali, and S.~N. Diggavi,
  ``Hierarchical coded caching,'' \emph{{IEEE } Trans. Info. Theory}, vol.~62,
  no.~6, pp. 3212--3229, 2016.

\bibitem{maddah2015cache}
M.~A. Maddah-Ali and U.~Niesen, ``Cache-aided interference channels,'' in
  \emph{\textit{{IEEE}} International Symposium on Info. Theory (ISIT)}, 2015,
  pp. 809--813.

\bibitem{shariatpanahi2015multi}
S.~P. Shariatpanahi, S.~A. Motahari, and B.~H. Khalaj, ``Multi-server coded
  caching,'' \emph{{IEEE } Trans. Info. Theory}, vol.~62, no.~12, pp.
  7253--7271, 2016.

\bibitem{ji2015caching}
M.~Ji, A.~M. Tulino, J.~Llorca, and G.~Caire, ``Caching in combination
  networks,'' in \emph{49th Asilomar Conference on Signals, Systems and
  Computers}, 2015, pp. 1269--1273.

\bibitem{ji2015fundamental}
M.~Ji, M.~F. Wong, A.~M. Tulino, J.~Llorca, G.~Caire, M.~Effros, and
  M.~Langberg, ``On the fundamental limits of caching in combination
  networks,'' in \emph{16th International Workshop on Signal Processing
  Advances in Wireless Communications (SPAWC)}, 2015, pp. 695--699.

\bibitem{tang2016coded}
L.~Tang and A.~Ramamoorthy, ``Coded caching for networks with the resolvability
  property,'' in \emph{\textit{{IEEE}} International Symposium on Information
  Theory (ISIT)}, 2016, pp. 420--424.

\bibitem{wan2017combination}
K.~Wan, M.~Ji, D.~Tuninetti, and P.~Piantanida, ``Combination networks with
  caches: Novel inner and outer bounds with uncoded cache placement,'' in
  \emph{arXiv:1701.06884}, 2017.

\bibitem{wan2017novel}
K.~Wan, M.~Ji, P.~Piantanida, and D.~Tuninetti, ``Novel outer bounds and inner
  bounds with uncoded cache placement for combination networks with
  end-user-caches,'' \emph{arXiv:1701.06884}, 2017.

\bibitem{wan2017caching}
------, ``Caching in combination networks: Novel multicast message generation
  and delivery by leveraging the network topology,'' \emph{arXiv:1710.06752},
  2017.

\bibitem{ngai2004network}
C.~K. Ngai and R.~W. Yeung, ``Network coding gain of combination networks,'' in
  \emph{\textit{{IEEE}} Information Theory Workshop (ITW)}, 2004, pp. 283--287.

\bibitem{xiao2007binary}
M.~Xiao, M.~M{\'e}dard, and T.~Aulin, ``A binary coding approach for
  combination networks and general erasure networks,'' in \emph{\textit{{IEEE}}
  International Symposium on Information Theory (ISIT)}, 2007, pp. 786--790.

\bibitem{baranyai1975factorization}
Z.~Baranyai, ``On the factorization of the complete uniform hypergraph,'' in
  \emph{Infinite and finite sets (Colloq., Keszthely, 1973; dedicated to P.
  Erdos on his 60th birthday)}, vol.~1, 1975, pp. 91--108.

\bibitem{lin2004error}
S.~Lin and D.~J. Costello, \emph{Error control coding}.\hskip 1em plus 0.5em
  minus 0.4em\relax Pearson Education India, 2004.

\bibitem{sengupta2015fundamental}
A.~Sengupta, R.~Tandon, and T.~C. Clancy, ``Fundamental limits of caching with
  secure delivery,'' \emph{IEEE Trans. on Info. Forensics and Security},
  vol.~10, no.~2, pp. 355--370, 2015.

\bibitem{awan2015fundamental}
Z.~H. Awan and A.~Sezgin, ``Fundamental limits of caching in {D2D} networks
  with secure delivery,'' in \emph{International Conference on Communication
  Workshop (ICCW)}, 2015, pp. 464--469.

\bibitem{ravindrakumar2016fundamental}
V.~Ravindrakumar, P.~Panda, N.~Karamchandani, and V.~Prabhakaran, ``Fundamental
  limits of secretive coded caching,'' in \emph{\textit{{IEEE}} International
  Symposium on Information Theory (ISIT)}, 2016, pp. 425--429.

\bibitem{zewail2016fundamental}
A.~A. Zewail and A.~Yener, ``Fundamental limits of secure device-to-device
  coded caching,'' in \emph{50th Asilomar Conference on Signals, Systems and
  Computers}.\hskip 1em plus 0.5em minus 0.4em\relax IEEE, 2016, pp.
  1414--1418.

\bibitem{zewail2016coded}
A.~A.~Zewail and A.~Yener, ``Coded caching for resolvable networks with
  security requirements,'' in \emph{the 3rd Workshop on Physical-Layer Methods
  for Wireless Security, CNS}, 2016.

\bibitem{yeung2008information}
R.~W. Yeung, \emph{Information theory and network coding}.\hskip 1em plus 0.5em
  minus 0.4em\relax Springer Science \& Business Media, 2008.

\bibitem{niesen2016coded}
U.~Niesen and M.~A. Maddah-Ali, ``Coded caching with nonuniform demands,''
  \emph{{IEEE } Trans. Info. Theory}, vol.~63, no.~2, pp. 1146--1158, 2016.

\bibitem{wan2016caching}
K.~Wan, D.~Tuninetti, and P.~Piantanida, ``On caching with more users than
  files,'' in \emph{{IEEE} International Symposium on Information Theory
  (ISIT)}, 2016, pp. 135--139.

\bibitem{chen2014fundamental}
Z.~Chen, P.~Fan, and K.~B. Letaief, ``Fundamental limits of caching: Improved
  bounds for small buffer users,'' \emph{arXiv:1407.1935}, 2014.

\bibitem{tian2016caching}
C.~Tian and J.~Chen, ``Caching and delivery via interference elimination,'' in
  \emph{{IEEE} International Symposium on Information Theory (ISIT)}, 2016, pp.
  830--834.

\bibitem{shannon1949}
C.~E. Shannon, ``Communication theory of secrecy systems,'' \emph{Bell System
  Technical Journal}, vol.~28, no.~4, pp. 656--715, 1949.

\bibitem{secretsharing}
R.~Cramer, I.~B. Damgard, and J.~B. Nielsen, \emph{Secure Multiparty
  Computation and Secret Sharing}.\hskip 1em plus 0.5em minus 0.4em\relax
  Cambridge {U}niversity {P}ress, 2015.

\bibitem{blakley1984security}
G.~R. Blakley and C.~Meadows, ``Security of ramp schemes,'' in \emph{Workshop
  on the Theory and Application of Cryptographic Techniques}.\hskip 1em plus
  0.5em minus 0.4em\relax Springer, 1984, pp. 242--268.

\end{thebibliography}
 \vspace{-.2 in} 
\end{document}